\def\thick#1{\hbox{\rlap{$#1$}\kern0.25pt\rlap{$#1$}\kern0.25pt$#1$}}
\def\smbalpha{\boldsymbol{{\scriptstyle{\alpha}}}}
\def\smbalpha{\widehat{\smbalpha}}
\def\hbar{\bar{ h}}
\def\mybox#1{\vskip1mm \begin{center}
        \hspace{.0\textwidth}\vbox{\hrule\hbox{\vrule\kern6pt
\parbox{.9\textwidth}{\kern6pt#1\vskip6pt}\kern6pt\vrule}\hrule}
        \end{center} \vskip-5mm}
\def\lboxit#1{\vbox{\hrule\hbox{\vrule\kern6pt
      \vbox{\kern6pt#1\vskip6pt}\kern6pt\vrule}\hrule}}
\def\thickboxit#1{\vbox{{\hrule height 1mm}\hbox{{\vrule width 1mm}\kern6pt
          \vbox{\kern6pt#1\kern6pt}\kern6pt{\vrule width 1mm}}
               {\hrule height 1mm}}}
\def\fat#1{\hbox{\rlap{$#1$}\kern0.25pt\rlap{$#1$}\kern0.25pt$#1$}}
\definecolor{light-gray}{gray}{0.96}
 \newtheorem{theorem}{Theorem}
 \newtheorem{fact}{Fact}
 \newtheorem{lemma}{Lemma}
 \newtheorem{corollary}{Corollary}
 \newtheorem{claim}{Claim}
\newcommand{\defn}[1]{\textbf{\emph{#1}}}
\newcommand{\STB}{\textsc{Sawtooth Backoff}\xspace}
\newcommand{\BEB}{\textsc{Binary Exponential Backoff}\xspace}
\newcommand{\LLB}{\textsc{Log-Log Backoff}\xspace}
\newcommand{\FB}{\textsc{Fixed Backoff}\xspace}
\newcommand{\stb}{\textsc{STB}\xspace}
\newcommand{\beb}{\textsc{BEB}\xspace}
\newcommand{\llb}{\textsc{LLB}\xspace}
\newcommand{\fb}{\textsc{FB}\xspace}
\newcommand{\irvs}{i.r.v.{\footnotesize{s}}\xspace}
\def \[{ \left[ }
\def \]{\right] }
\def \Sbbm{\mathbbm{S}}
\definecolor{magenta4}{rgb}{0.5625,0,0.5625}
\definecolor{green4}{rgb}{0,0.5625,0}
\definecolor{orange4}{rgb}{0.98,0.31,0.09}
\newtheorem{property}{Property}
\newcommand{\maxwell}[1]{\ifcomments {\noindent \scriptsize  \textcolor{red} {Max: {#1}}} \fi{}}
\newif\ifcomments
\begin{document}


\title{Singletons for Simpletons\\Revisiting Windowed Backoff with Chernoff Bounds}

\author{Qian M. Zhou}
\affil{Department of Mathematics and Statistics, Mississippi State University, MS, USA\\ \texttt{qz70@msstate.edu}}

\author{Alice Calvert\thanks{At the time this research was started, Alice Calvert was a high school student participating in an annual program between Mississippi State University and the Mississippi School for Mathematics and Science.}}
\affil{\texttt{aliceocalvert@protonmail.com}}

\author{Maxwell Young\thanks{This research is supported by the National Science Foundation grant CNS 1816076 and the National Institute of Justice grant 2018-75-CX-K002.}}
\affil{Department of Computer Science and Engineering, Mississippi State University, MS, USA\\ \texttt{myoung@cse.msstate.edu}}


\maketitle

\begin{abstract}
Backoff algorithms are used in many distributed systems where multiple devices contend for a shared resource. For the classic balls-into-bins problem, the number of singletons---those bins with a single ball---is important to the analysis of several backoff algorithms; however, existing analyses  employ advanced probabilistic tools. Here, we show that standard Chernoff bounds can be used instead, and the simplicity of this approach is illustrated by re-analyzing some well-known  backoff algorithms.
\end{abstract}


\section{Introduction}\label{sec:intro}

Backoff algorithms address the general problem of how to share a resource among multiple devices.  A popular application is IEEE 802.11 (WiFi) networks~\cite{802.11-standard,sun:backoff,kurose:computer}, where the resource is a wireless channel used by devices to send packets. Any single packet sent uninterrupted over the channel is likely to be received, but if the sending times of two or more packets overlap, communication often fails due to destructive interference at the receiver (i.e., a collision).  An important performance metric is the time required for all packets to be successfully sent, and this was originally analyzed for several well-known backoff algorithms by Bender et al.~\cite{BenderFaHe05}.

Here, we revisit several of these results, showing that a standard probabilistic tool---the Chernoff bound---is applicable, and illustrating how its use simplifies the analysis. A version of this work appeared at the $10^{\mbox{\tiny th}}$ {\it International Conference on Fun with Algorithms (FUN 2020)} under the {\defn{simplification of algorithms}} topic of interest.  

\medskip

\noindent{\bf Formal Model.} Time proceeds in discrete \defn{slots}, and each packet can be transmitted within a single slot. Starting from the first slot, a \defn{batch} of {\boldmath{$n$}} packets is ready to be transmitted on a shared channel.  Each packet can be viewed as originating from a different source device, and going forward we speak only of packets,  rather than devices.

For any fixed slot, if a single packet sends, then the packet \defn{succeeds}; however,  if two or more packets send, then all corresponding packets \defn{fail}. A packet that attempts to send in a slot learns whether it succeeded and, if so, the packet takes no further action; otherwise, the packet learns that it failed in that slot, and must try again at a later time. The number of slots required until all packets are successfully sent is known as the \defn{makespan}. \medskip

\noindent{\bf Background.} Binary exponential backoff (\beb) is a popular randomized algorithm for resource sharing. Originally introduced by Metcalf and Boggs~\cite{MetcalfeBo76} several decades ago, \beb is ubiquitous today, most notably in WiFi networks.  Execution occurs over disjoint, consecutive sets of slots called \defn{windows}. In every window, each packet that has not yet succeeded selects a single slot uniformly at random in which to send. If the packet succeeds, then it leaves the system; otherwise, the failed packet waits for the next window to begin and repeats this process.  For \beb, the $i^{\mbox{\tiny th}}$ window consists of $2^i$ slots, for $i\geq 0$. As we discuss later in Section~\ref{sec:describe-backoff}, other windowed backoff algorithms exist, where a different function of $i$ governs the corresponding window size.\smallskip


A natural question is the following: {\it{For a batch of $n$ packets and a given windowed backoff algorithm, what is the makespan?}} \smallskip

Interestingly, there is a close relationship between the makespan and the well-known balls-in-bins problem. In the latter, {\boldmath{$N$}} balls are dropped uniformly at random into {\boldmath{$B$}} bins. Associating the balls with packets, and the bins with the slots of a window, we are interested in the number of bins containing a single ball, which are sometimes referred to as \defn{singletons}~\cite{7282923}.

The makespan of backoff algorithms was first addressed  by Bender et al.~\cite{BenderFaHe05} who analyze several algorithms where windows monotonically increase in size. Despite their simple specification,  these  algorithms require a surprisingly intricate makespan analysis. In particular, obtaining concentration bounds on the number of slots (or bins) that contain a single packet (or ball)---which we will {\it also} refer to as singletons---is complicated by dependencies that rule out the immediate application of Chernoff bounds. This is unfortunate given that Chernoff bounds are often one of the first powerful probabilistic tools that researchers learn, and they are standard content in a randomized algorithms course.

The makespan results in Bender et al.~\cite{BenderFaHe05} are derived via delay sequences~\cite{Karlin:1986:PHE:12130.12146,Upfal:1984:ESP:828.1892}, which are arguably a less-common topic of instruction.  Alternative tools for handling dependencies include Poisson-based approaches by Mizenmacher~\cite{Mitzenmacher:1996:PTC:924815} and Mitzenmacher and Upfal~\cite{Mitzenmacher:2005:PCR:1076315}, and martingales~\cite{dubhashi:concentration}, but to the best of our knowledge, these have not been applied to the analysis of windowed backoff algorithms.\medskip




\noindent{\bf Our Goal.} We aim for a simpler way to derive the makespan of windowed backoff algorithms. This is accomplished in two parts: we show (1)  that a standard Chernoff bound can be used in the analysis, and (2) that this yields a straightforward approach to deriving the makespan. Our hope is that this work may improve accessibility to backoff algorithms for researchers learning about the area. \medskip

\noindent{\bf Benefits of Our Approach.} We note that claims of simplicity can be a matter of taste. There  is work involved in showing that Chernoff bounds can be applied to this problem. However,  in our opinion, this does not add much overhead to a rigorous introduction of Chernoff bounds that many researchers receive,  and it is arguably less complex than the alternatives.  For instance,  Dhubashi and Panconesi~\cite{dubhashi:concentration} derive Chernoff bounds almost immediately, starting on page $3$. In contrast, their discussion of negative dependence occurs in Chapter 3, while martingales and related tools for handling dependent random variables are deferred until Chapter 5.\footnote{An alternative route may involve negative dependence; this can be used to analyze the occupancy numbers of the balls-in-bins problem, although this is not what we need here. Alternatively, the Method of Bounded Differences (Corollary 5.2 in~\cite{dubhashi:concentration}), or more powerful tools, can be used to prove bounds that would allow us to analyze makespan, but the derivation is involved.}  

So, Chernoff bounds are often one of the first powerful probabilistic tools that researchers learn as they become involved in algorithms research. And what if we can deploy them to analyze singletons? Then, as we show here, the makespan analysis distills to proving the correctness of a ``guess'' regarding a recursive formula describing the number of packets remaining after each window, and that this guess has small error probability. Thus, once we can use Chernoff bounds, the approach for analyzing these backoff algorithms becomes simple and intuitive.

Finally, showing that another problem, especially one that has such a wide range of applications, succumbs to Chernoff bounds is  aesthetically satisfying.

\medskip

\subsection{Our Results} We show that Chernoff bounds can indeed be used as proposed above.  Our approach involves an argument that the indicator random variables for counting singletons satisfy the following property from~\cite{dubhashi:negative}:

\begin{property}\label{prop1}
Given a set of $n$ indicator random variables $\{X_1,\cdots,X_n\}$, for all subsets $\Sbbm \subset \{1,\cdots,n\}$ the following is true:
\begin{equation}\label{equ:neg-asso-defn}
Pr\left[\bigwedge\limits _{j\in \Sbbm} X_j=1 \right] \leq \prod_{j\in \Sbbm} Pr\[X_j=1\].
\end{equation} 
\end{property}

\noindent We prove the following:

\begin{theorem}\label{thm:neg-assoc-result}
Consider $N$ balls dropped uniformly at random into $B$ bins. Let $I_j=1$ if bin $j$ contains exactly $1$ ball, and $I_j=0$ otherwise, for $j=1,\cdots,B$. If $B\geq N+\sqrt{N}$ or $B\leq N-\sqrt{N}$, then $\{I_1,\cdots,I_B\}$ satisfy the Property~\ref{prop1}.
\end{theorem}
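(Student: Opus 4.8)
The plan is to reduce Property~\ref{prop1} for $\{I_1,\dots,I_B\}$ to a one-variable-at-a-time deterministic inequality, and then to verify that inequality precisely when $|B-N|\ge\sqrt N$. By relabelling bins it suffices to establish \eqref{equ:neg-asso-defn} for $S=\{1,\dots,k\}$ for each $k\le B$ (and when $k=B$ the hypothesis forces $N\ne B$, which makes the left-hand side of \eqref{equ:neg-asso-defn} vanish, so we may take $k<B$).

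First I would record the exact probabilities. The marginal is $\Pr[I_j=1]=\tfrac{N}{B}\bigl(1-\tfrac1B\bigr)^{N-1}$ for every $j$. The key observation is that conditioning on $I_1=\cdots=I_m=1$, by exchangeability of the balls the $N-m$ balls not lying in bins $1,\dots,m$ are distributed uniformly and independently among the remaining $B-m$ bins; hence
\[
q_m:=\Pr[\,I_{m+1}=1\mid I_1=\cdots=I_m=1\,]=\frac{N-m}{B-m}\Bigl(1-\frac{1}{B-m}\Bigr)^{N-m-1},
\]
which is exactly the singleton probability of a balls-in-bins instance with $N-m$ balls and $B-m$ bins, and $q_0=\Pr[I_j=1]$. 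By the chain rule $\Pr\!\bigl[\bigwedge_{j=1}^{k}I_j=1\bigr]=\prod_{m=0}^{k-1}q_m$, so \eqref{equ:neg-asso-defn} amounts to $\prod_{m=0}^{k-1}q_m\le q_0^{\,k}$, for which it suffices to show $q_m\le q_0$ for every $m$. Setting $d:=B-N$ and $h(n):=\tfrac{n}{n+d}\bigl(1-\tfrac{1}{n+d}\bigr)^{n-1}$, we have $q_m=h(N-m)$ and $q_0=h(N)$, so the theorem reduces to the assertion that $h$ is non-decreasing over the relevant integer range of $n$. In this form the two hypotheses merge into the single condition $|d|\ge\sqrt N$, and the ``guess'' the introduction alludes to is exactly the recursion $q_m$ together with the claim that it never exceeds its initial value.

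The crux is the step $h(n)\le h(n+1)$. Clearing denominators, this is equivalent to
\[
\frac{n+1}{n}\Bigl(\frac{n+d}{n+d+1}\Bigr)^{2}\ \ge\ \Bigl(1-\frac{1}{(n+d)^2}\Bigr)^{\,n-1}.
\]
Here one must resist the usual reflexes: bounding via $1-x\le e^{-x}$ and Bernoulli's inequality is far too lossy and only yields the claim for $B\gtrsim 2N$. Instead I would take logarithms and expand each $\log(1-x)$, keeping terms through second order while crudely dominating the tails; the genuinely first-order ($\Theta(1/N)$) contributions cancel, and the surviving inequality reduces to one of the shape $d^{2}\gtrsim (n+d)^{2}/(2n)$ with only a negligible additive slack, which $|d|\ge\sqrt N$ implies throughout the range --- the comparison being tightest, and in fact essentially tight at the threshold, for $n$ close to $N$. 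The handful of boundary cases with $n+d$ a small constant (where the logarithmic expansions are invalid) would be dispatched by direct computation.

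The main obstacle is exactly this last estimate. Because the inequality is essentially tight at $B=N+\sqrt N$ --- the joint probability $\Pr[\bigwedge_{j\in S}I_j=1]$ and the product $\prod_{j\in S}\Pr[I_j=1]$ agree to leading order there, with only a thin positive second-order gap --- the expansion must be carried out carefully: the second-order terms have to be tracked exactly, and one must confirm that the error from the higher-order tails never consumes the $O(1)$ slack between $(n+d)^2/(2n)$ and $d^2=N$, uniformly over all admissible $k$.
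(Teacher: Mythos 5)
Your structural reduction is exactly the paper's: condition bin by bin, write $\Pr\bigl[\bigwedge_{j\le k} I_j=1\bigr]=\prod_{m=0}^{k-1}q_m$ with $q_m=\frac{N-m}{B-m}\bigl(1-\frac{1}{B-m}\bigr)^{N-m-1}$, and reduce Property~\ref{prop1} to $q_m\le q_0$ for all $m$, i.e.\ to the monotonicity of $h(n)=\frac{n}{n+d}\bigl(1-\frac{1}{n+d}\bigr)^{n-1}$ --- this is precisely the paper's Lemma~\ref{thm:cond-prob}. Your displayed inequality is algebraically equivalent to the paper's ratio condition $\mathcal P_j/\mathcal P_{j+1}\ge 1$, and your leading-order analysis is also right: with $c=n+d$, the log of the left side minus the log of the right side is $\frac1n-\frac2c+\frac{n}{c^2}+(\text{remainders})=\frac{d^2}{nc^2}+(\text{remainders})$, so the surviving requirement is indeed of the shape $d^2\gtrsim c^2/(2n)$, which $d^2\ge N$ implies with roughly a factor-$2$ cushion. (Incidentally, this means the hypothesis is \emph{not} essentially tight at $|d|=\sqrt N$ as you suggest; the genuine failure threshold sits nearer $\sqrt{N/2}$.)

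The gap is that the one step you yourself flag as the crux is never carried out. Turning the second-order logarithmic expansion into a proof requires two-sided, uniform control of the remainders of three separate series --- including the $(n-1)$-fold amplified tail of $\log(1-1/c^2)$ --- over all $2\le n\le N$ and both signs of $d$, plus the small-$c$ boundary cases; you defer all of this to ``crudely dominating the tails'' and a ``negligible additive slack,'' which is exactly where such an argument can silently leak. The paper sidesteps asymptotics entirely with an exact identity: expanding $\bigl[1+\frac{1}{(a+y)(a+y-2)}\bigr]^{a-1}$ by the binomial theorem yields, exactly, $\frac{\mathcal P_j}{\mathcal P_{j+1}}=1+\frac{y^2-a}{(a+y)^2(a-1)}+T$ where $T$ is a sum of manifestly nonnegative terms, so $y^2\ge N\ge a$ finishes the lemma with no error analysis at all. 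I would replace your logarithmic expansion with that exact binomial expansion; it is both simpler than what you propose and closes the hole.
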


The proof of Property~\ref{prop1} permits the use of standard Chernoff bounds (defined in Theorem~\ref{thm:dhubashi-panconesi}). This implication is posed as an exercise by Dubhashi and Panconesi~\cite{dubhashi:concentration} (Problem 1.8), but no solution is provided~\cite{dubhashi:concentration}. We fill in this gap with our own argument. However, in order to avoid interrupting the flow of our main arguments for makespan analysis, we defer our discussion and proof until the appendix. \medskip

We then show how to use Chernoff bounds to obtain asymptotic makespan results for some of the algorithms  previously analyzed by Bender et al.~\cite{BenderFaHe05}: \BEB (\beb), \FB (\fb), and \LLB (\llb). Additionally, we re-analyze the asymptotically-optimal (non-monotonic) \STB (\stb) from~\cite{GreenbergL85,Gereb-GrausT92}.  These algorithms are specified in Section~\ref{sec:describe-backoff}, but our makespan results are stated below.  

\begin{theorem}
For a batch of $n$ packets, the following holds with probability at least $1-O(1/n)$:
\begin{itemize}
\item \fb has makespan at most $n\lg\lg n +  O(n)$. 
\item \beb has makespan at most $512n\lg n + O(n)$. 
\item \llb has makespan $O(n\lg\lg n/\lg\lg\lg n)$. \vspace{-3pt}
\item \stb has makespan $O(n)$.\vspace{-3pt}
\end{itemize}
\end{theorem}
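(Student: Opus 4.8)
The plan is to analyze all four algorithms through a single reduction to balls-and-bins. For a given windowed schedule, let $n_i$ denote the (random) number of still-unsuccessful packets at the start of window $i$, and let $B_i$ be its size. Conditioned on $n_i$, window $i$ behaves exactly like $n_i$ balls dropped uniformly into $B_i$ bins, and the packets that succeed are precisely the singletons. Whenever $B_i\ge n_i+\sqrt{n_i}$ or $B_i\le n_i-\sqrt{n_i}$, Theorem~\ref{thm:neg-assoc-result} tells us the singleton indicators satisfy Property~\ref{prop1}, so (via the appendix implication) a one-sided Chernoff bound controls the number of successes $n_i-n_{i+1}$, whose conditional mean is $\mu_i=n_i(1-1/B_i)^{n_i-1}$. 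The heart of each proof is then a guess-and-verify induction: posit a deterministic majorizing recursion $\widetilde n_{i+1}=f(\widetilde n_i,B_i)$, and show by induction on $i$---collecting the per-window Chernoff failure probabilities with a union bound---that $n_i\le\widetilde n_i$ for all $i$ except with probability $O(1/n)$. The makespan is then $\sum_i B_i$ up to the first window where $\widetilde n_i$ becomes small, plus the cost of an endgame that clears the last few packets.

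Instantiating this: for \fb, all windows have size $\Theta(n)$, so $B_i\ge n_i+\sqrt{n_i}$ for every $i\ge1$, and the recursion $\widetilde n_{i+1}\approx\widetilde n_i(1-e^{-\widetilde n_i/n})$ contracts doubly exponentially once $\widetilde n_i\ll n$, yielding $\lg\lg n+O(1)$ effective windows and makespan $n\lg\lg n+O(n)$. For \beb, the early windows with $B_i\ll n_i$ make negligible progress but cost only $\sum_{2^i\le n}2^i=O(n)$ in total; the ``active'' phase begins when $B_i\gtrsim n_i$, where the same doubly-exponential contraction runs for $O(\lg\lg n)$ windows of geometrically increasing size, the largest being $\Theta(n\lg n)$ and dominating the sum (tracking the hidden constant carefully yields the stated $512$). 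For \llb, the marginally faster window growth cuts the number of effective windows to $O(\lg\lg n/\lg\lg\lg n)$ while each stays $O(n\cdot\mathrm{polyloglog}(n))$, and the geometric series sums to $O(n\lg\lg n/\lg\lg\lg n)$. For \stb, I would run the same argument within each sawtooth ``tooth'': once a tooth contains a window of size comparable to the current $n_i$, a constant fraction of packets succeeds \whp, so $n_i$ decays geometrically across teeth and the total work forms a geometric series of size $O(n)$.

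Two issues need care, and I expect most of the work to be here. First, Theorem~\ref{thm:neg-assoc-result} is silent in the ``gap'' regime $n_i-\sqrt{n_i}<B_i<n_i+\sqrt{n_i}$; I would argue each algorithm passes through this regime in only $O(1)$ windows per monotone run (or per tooth), and handle those with a standalone estimate. The \emph{empty}-bin indicators satisfy Property~\ref{prop1} unconditionally (since $(1-k/B)^N\le(1-1/B)^{kN}$), so the empty-bin count concentrates, and the elementary deterministic inequality $\#\{\text{singletons}\}\ge 2B-N-2\,\#\{\text{empty bins}\}$ then shows a fixed constant fraction of packets succeeds \whp even in the gap; charging these windows adds only $O(n)$ to the makespan. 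Second, multiplicative Chernoff concentration weakens once $\mu_i$ falls below $\Theta(\lg n)$, i.e., once $n_i=O(\lg n)$, so I would stop the induction at a window $j^\star$ where $n_{j^\star}=O(\lg n)$ w.h.p.\ and $B_{j^\star}=\Omega(n\lg n/\mathrm{polyloglog}(n))$, and finish with a first-moment argument: the remaining count $R$ is non-increasing and already $O(\lg n)$ while windows keep doubling, so $\E[R_{t+1}\mid R_t]\le R_t^2/B_{j^\star+t}$ drives $\E[R]$ down by a factor $o(1/n)$ per window, and after $O(1)$ windows Markov's inequality gives ``all packets done'' with probability $1-O(1/n)$ at an extra makespan cost of only $O(n\lg n)$ for \beb (and less for the others). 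The genuinely delicate part is reconciling this endgame and the gap-window accounting with the leading-order makespan---in particular pinning the constant $512$ for \beb---rather than any single concentration estimate.
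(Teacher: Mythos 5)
Your overall route is the same as the paper's: reduce each window to balls-in-bins, invoke Theorem~\ref{thm:neg-assoc-result} plus the Chernoff bound of Theorem~\ref{thm:dhubashi-panconesi} to get per-window concentration of the singleton count, verify a guessed doubly-exponential majorizing recursion by induction with a union bound over the $O(\lg\lg n)$ windows, and finish with a collision-counting endgame. The one place you genuinely depart from the paper is the ``gap'' regime $n_i-\sqrt{n_i}<B_i<n_i+\sqrt{n_i}$: the paper sidesteps it for \fb, \beb, and \stb by starting the analysis only once $w_i\geq n+\sqrt{n}$, and for \llb it simply charges one full plateau ($O(\lg\lg n)$ windows of size $O(n/\ln\ln\ln n)$) while pessimistically assuming no progress. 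Your alternative---empty-bin indicators satisfy Property~\ref{prop1} unconditionally, the empty-bin count therefore concentrates, and the deterministic identity $Z_1\geq 2B-N-2Z_0$ then forces a constant fraction of singletons when $B\approx N$---is correct (at $B\approx N$ it gives roughly $(1-2/e)B$ singletons) and actually proves progress where the paper concedes it; it would also justify your claim that the gap lasts $O(1)$ windows. That is a nice, self-contained strengthening, though it is not needed for any of the stated asymptotic bounds.

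The concrete problem is your endgame threshold. You propose running the Chernoff induction until $n_{j^\star}=O(\lg n)$ and only then switching to the first-moment argument. But once $n_i$ is small, the smallest usable $\epsilon$ for a $1/n^2$ failure probability is $\Theta(\sqrt{\lg n/n_i})$, so the recursion degrades to $n_{i+1}\approx n_i^2/B+\sqrt{c\,n_i\lg n}$, and the additive term dominates: the exponent of $n_i$ only halves per window, so descending from $n^{\Theta(1)}$ to $O(\lg n)$ costs another $\Theta(\lg\lg n)$ windows. For \fb that doubles the leading term to $2n\lg\lg n$, and for \beb those extra doublings inflate the bound to $n\cdot\mathrm{polylog}(n)$, so the stated $n\lg\lg n+O(n)$ and $512n\lg n+O(n)$ would not survive. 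The fix is exactly what the paper does (Corollary~\ref{cor:remaining}, Case 2, followed by Lemma~\ref{lem:last-window}): abandon Chernoff already at $m_i=O(n^{2/5})$, where the pairwise-collision bound gives $\Pr[\text{any packet fails}]\leq m_i^2/w_i=O(n^{-1/5})$ with no concentration needed, and boost this over $O(1)$ consecutive (non-shrinking) windows to reach failure probability $O(1/n)$. You already have this tool in hand via $\E[R_{t+1}\mid R_t]\leq R_t^2/B$; you just need to deploy it at $n^{2/5}$ survivors rather than at $O(\lg n)$.
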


We highlight that both of the cases in Theorem~\ref{thm:neg-assoc-result},  $B\leq N-\sqrt{N}$ and $B\geq N+\sqrt{N}$, are useful. Specifically,  the analysis for \beb, \fb, and \stb uses the first case, while \llb uses both.
 

\subsection{Related Work}

A preliminary version of this work appeared in the proceedings of the $10^{\mbox{\tiny th}}$ {\it International Conference on Fun with Algorithms (FUN 2020)} under the topic of interest {\defn{FUN with simplification of algorithms}}.  Here, we include our full proofs, further simplify parts of the analysis, and provide an expanded exposition. 

Several prior results address dependencies and their relevance to Chernoff bounds and load-balancing in various balls-in-bins scenarios.  In terms of analyzing backoff algorithms, the literature is vast. In both cases, we summarize  closely-related works. \medskip

\noindent{\bf Dependencies, Chernoff Bounds, \& Ball-in-Bins.} Backoff is closely-related to balls-and-bins problems~\cite{AzarBrKa99,ColeFrMa98,RichaMi01,Vocking03}, where balls and bins correspond to packets and slots, respectively. Balls-in-bins analysis often arises in problems of load balancing (for examples, see~\cite{BerenbrinkCzEg12,BerenbrinkCzSt06,BerenbrinkKhSa13}). Dubhashi and Ranjan~\cite{dubhashi:negative} prove that the occupancy numbers --- random variables $N_i$ denoting the number of balls that fall into bin $i$ --- are negatively associated. This result is used by Lenzen and Wattenhofer~\cite{lenzen:tight} use it to prove negative association for the random variables that correspond to {\underline {at most}} $k\geq 0$ balls. Czumaj and Stemann~\cite{czumaj:randomized} examine the maximum load in bins under an adaptive process where each ball is placed into a bin with minimum load of those sampled prior to placement. Finally, Dubhashi and Ranjan~\cite{dubhashi:negative} also show that Chernoff bounds remain applicable when the corresponding indicator random variables that are negatively associated. The same result is presented in Dubhashi and Panconesi~\cite{dubhashi:concentration}.\medskip


\noindent{\bf Backoff Algorithms.}  As summarized in Section~\ref{sec:intro}, for a batch of $n$ packets, a generic backoff algorithm executes over windows. For every window, each packet that has not yet succeeded selects a single slot uniformly at random in which to send. If the packet succeeds, then it leaves the system; otherwise, the failed packet waits for the next window to begin and repeats this process. Backoff algorithms differ in the way that they change their window size, and we further details in Section~\ref{sec:describe-backoff}. 

Many early results on backoff are given in the context of statistical queuing-theory  (see~\cite{hastad:analysis,GoodmanGrMaMa88,RaghavanUp99,GOLDBERG1999232,hastad:analysis,goldberg:contention}) where a common assumption is that packet-arrival times are Poisson distributed.

In contrast, for a batch of packets, backoff algorithms with  monotonically-increasing window sizes has been analyzed in~\cite{BenderFaHe05}, and with packets of different sizes in~\cite{bender:heterogeneous}. A windowed, non-monotonic backoff algorithm that is asymptotically optimal for a batch of packets is provided in~\cite{Gereb-GrausT92,GreenbergL85,FernandezAntaMoMu13}.  

A related problem is  {\it contention resolution}, which addresses the time until the first packet succeeds~\cite{willard:loglog,nakano2002,FinemanNW16,fineman:contention}. This has close ties to the well-known problem of leader election (for examples, see~\cite{ChangKPWZ17,Chang:2018:ECB:3212734.3212774}).

Several results examine the {\it dynamic} case where packets arrive over time as scheduled in a worst-case fashion~\cite{doi:10.1137/140982763,DeMarco:2017:ASC:3087801.3087831,Bender:2016:CRL:2897518.2897655,DBLP:conf/spaa/AgrawalBFGY20}. A similar problem is that of
{\it wake-up}~\cite{chlebus:better,chlebus:wakeup,chrobak:wakeup,Chlebus:2016:SWM:2882263.2882514,DEMARCO20171,Jurdzinski:2015:CSM:2767386.2767439}, which addresses how long it takes for a single transmission to succeed when packets arrive under the dynamic scenario. 

Finally, several results address the case where the shared communication channel is unavailable at due to malicious interference~\cite{awerbuch:jamming,richa:jamming2,richa:jamming3,richa:jamming4,ogierman:competitive,Anantharamu2011,BenderFiGi16,DBLP:journals/jacm/BenderFGY19}.


\section{Analysis for Property~\ref{prop1} }

We present our results on Property~\ref{prop1}. Since we believe this result may be useful outside of backoff, our presentation in this section is given in terms of the well-known balls-in-bins terminology, where we have {\boldmath{$N$}} balls that are dropped uniformly at random into {\boldmath{$B$}} bins.


\subsection{Preliminaries}\label{sec:prelim}

Throughout, we often employ the following inequalities (see Lemma 3.3 in~\cite{richa:jamming4}), and we will refer to the left-hand side (LHS) or right-hand side (RHS) when doing so.
\begin{fact}\label{fact:taylor}
For any $0<x<1$,   $e^{-x/(1-x)} \leq 1 - x \leq e^{-x}$.
\end{fact}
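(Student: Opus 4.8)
The plan is to reduce both halves of the claimed chain to the single elementary inequality $1+t \le e^{t}$, valid for every real $t$, and then handle each side by an appropriate choice of $t$. First I would establish this base inequality: setting $f(t)=e^{t}-1-t$, we have $f(0)=0$ and $f'(t)=e^{t}-1$, which is negative for $t<0$ and positive for $t>0$; hence $t=0$ is the unique global minimizer, so $f(t)\ge 0$ for all $t$, with equality only at $t=0$. Equivalently, this is just the statement that the convex function $e^{t}$ lies above its tangent line $1+t$ at the origin.

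For the right-hand inequality $1-x \le e^{-x}$, I would simply apply the base inequality with $t=-x$, which yields $1+(-x)\le e^{-x}$ immediately; this requires nothing beyond $x$ being real, and in particular holds throughout $(0,1)$.

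The left-hand inequality $e^{-x/(1-x)} \le 1-x$ is the one calling for a small manipulation, and is where I expect the only (very mild) care to be needed. The key move is the substitution $y := x/(1-x)$, which for $0<x<1$ is a positive real and satisfies the identity $1+y = 1/(1-x)$. Applying the base inequality with $t=y$ then gives $1/(1-x) = 1+y \le e^{y} = e^{x/(1-x)}$. Since $0<x<1$ guarantees $1-x>0$, both sides are strictly positive, so I may take reciprocals, reversing the inequality, to obtain $e^{-x/(1-x)} \le 1-x$, as desired.

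The hard part, such as it is, lies only in verifying the identity $1+y=1/(1-x)$ and correctly tracking the sign condition $1-x>0$ that licenses the reciprocal step; no genuine analytic obstacle arises, since everything devolves to the convexity of the exponential. Combining the two bounds yields the full chain $e^{-x/(1-x)} \le 1-x \le e^{-x}$ on $(0,1)$, completing the proof.
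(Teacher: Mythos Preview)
Your argument is correct: both inequalities reduce cleanly to $1+t\le e^t$, and your reciprocal step for the left-hand bound is properly justified by the positivity of $1-x$ on $(0,1)$. The paper does not actually supply its own proof of this fact---it simply cites Lemma~3.3 of \cite{richa:jamming4}---so there is nothing to compare against; your self-contained derivation is a fine elementary justification.
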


\noindent Knowing that indicator random variables (\irvs) satisfy Property~\ref{prop1}  is useful since the following Chernoff bounds can then be applied.
\begin{theorem}\label{thm:dhubashi-panconesi}
(Dubhashi and Panconesi~\cite{dubhashi:concentration})\footnote{This is stated in Problem 1.8 in~\cite{dubhashi:concentration}, and we present a proof in~\ref{app:property1} .}   Let $X=\sum_{i} X_i$ where $X_1, ..., X_m$ are \irvs that satisfy Property~\ref{prop1} . For $0< \epsilon < 1$, the following holds:
\begin{eqnarray}
Pr[X > (1+\epsilon)E[X] ] &\leq& \exp\left( -\frac{\epsilon^2}{3}E[X] \right)\label{eqn:chernoff-upper}\\
Pr[X  < (1-\epsilon)E[X] ] &\leq& \exp\left( -\frac{\epsilon^2}{2}E[X] \right)\label{eqn:chernoff-lower}
\end{eqnarray}
\end{theorem}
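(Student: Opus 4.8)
The plan is to run the classical exponential‑moment (Chernoff) argument essentially verbatim; the only non‑textbook step is factorizing the moment generating function, and Property~\ref{prop1} is exactly what makes that step go through.

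\emph{Upper tail.} Fix $t\ge 1$. Since each $X_i\in\{0,1\}$ we have $t^{X_i}=1+(t-1)X_i$, so expanding the product $\prod_i\bigl(1+(t-1)X_i\bigr)$ and taking expectations,
\[
\mathbb{E}\bigl[t^{X}\bigr]\;=\;\sum_{S\subseteq[m]}(t-1)^{|S|}\,\Pr\Bigl[\,\textstyle\bigwedge_{i\in S}X_i=1\Bigr].
\]
Because $t\ge1$ every coefficient $(t-1)^{|S|}$ is nonnegative, so Property~\ref{prop1} applied term by term gives
\[
\mathbb{E}\bigl[t^{X}\bigr]\;\le\;\sum_{S\subseteq[m]}(t-1)^{|S|}\,\textstyle\prod_{i\in S}\Pr[X_i=1]\;=\;\textstyle\prod_i\bigl(1+(t-1)\mathbb{E}[X_i]\bigr)\;=\;\textstyle\prod_i\mathbb{E}\bigl[t^{X_i}\bigr].
\]
The rest is textbook: $1+y\le e^{y}$ gives $\mathbb{E}[t^X]\le e^{(t-1)\mathbb{E}[X]}$; Markov's inequality gives $\Pr[X>(1+\epsilon)\mathbb{E}[X]]\le t^{-(1+\epsilon)\mathbb{E}[X]}\mathbb{E}[t^X]$; the choice $t=1+\epsilon$ yields the bound $\bigl(e^{\epsilon}/(1+\epsilon)^{1+\epsilon}\bigr)^{\mathbb{E}[X]}$; and the elementary estimate $e^{\epsilon}/(1+\epsilon)^{1+\epsilon}\le e^{-\epsilon^2/3}$ for $0<\epsilon<1$ gives \eqref{eqn:chernoff-upper}.

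\emph{Lower tail.} I would mirror this with $e^{-\lambda X}$, $\lambda>0$. Markov gives $\Pr[X<(1-\epsilon)\mathbb{E}[X]]\le e^{\lambda(1-\epsilon)\mathbb{E}[X]}\mathbb{E}[e^{-\lambda X}]$; then $\mathbb{E}[e^{-\lambda X_i}]=1-(1-e^{-\lambda})\mathbb{E}[X_i]\le e^{-(1-e^{-\lambda})\mathbb{E}[X_i]}$ by the right inequality of Fact~\ref{fact:taylor} (here $0<(1-e^{-\lambda})\mathbb{E}[X_i]<1$); and optimizing at $\lambda=-\ln(1-\epsilon)$, together with $e^{-\epsilon}/(1-\epsilon)^{1-\epsilon}\le e^{-\epsilon^2/2}$, gives \eqref{eqn:chernoff-lower}. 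The one non‑automatic step is the factorization $\mathbb{E}[e^{-\lambda X}]\le\prod_i\mathbb{E}[e^{-\lambda X_i}]$: expanding $e^{-\lambda X_i}=1-(1-e^{-\lambda})X_i$ now produces the sign‑alternating sum $\sum_{S}\bigl(-(1-e^{-\lambda})\bigr)^{|S|}\Pr[\bigwedge_{i\in S}X_i=1]$, and on odd‑sized $S$ the inequality of Property~\ref{prop1} points the wrong way. The fix is to invoke the companion bound $\Pr[\bigwedge_{i\in S}X_i=0]\le\prod_{i\in S}\Pr[X_i=0]$ --- Property~\ref{prop1} for the complementary indicators $1-X_i$, the other half of the standard negative‑dependence condition of~\cite{dubhashi:negative} --- and apply the upper‑tail computation above to $m-X=\sum_i(1-X_i)$, which returns the lower tail for $X$.

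I expect the main obstacle to be exactly this asymmetry. The upper tail falls out of Property~\ref{prop1} for free, since the exponential expansion has nonnegative coefficients when $t\ge1$; but the same device fails when $t<1$ --- one can build indicators satisfying Property~\ref{prop1} (for instance $X_1,X_2$ independent fair bits and $X_3=X_1\oplus X_2$) for which $\mathbb{E}[t^{X}]>\prod_i\mathbb{E}[t^{X_i}]$ --- so the lower tail genuinely needs the ``all‑zeros'' companion inequality in addition to \eqref{equ:neg-asso-defn}. Everything else is routine: the two optimizations over $t$ and $\lambda$, and the two elementary rate‑function bounds, which I would isolate in a short lemma (each follows from a second‑order Taylor expansion with remainder).
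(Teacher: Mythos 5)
Your upper-tail argument is correct and is essentially the paper's own proof in a cleaner form: the appendix (Claim~\ref{claim-chernoff}) establishes the same factorization $E[e^{\lambda X}]\le\prod_i E[e^{\lambda X_i}]$ by expanding each $e^{\lambda X_i}$ as a power series and comparing terms via Property~\ref{prop1}; your identity $t^{X_i}=1+(t-1)X_i$ collapses that bookkeeping, and both arguments rest on the same fact that every coefficient in the expansion is nonnegative when $t\ge1$ (equivalently $\lambda\ge0$).

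The lower tail is where the substance lies, and you have put your finger on a real problem, but as a proof of the theorem \emph{as stated} your argument has a gap: the companion inequality $Pr[\bigwedge_{i\in S}X_i=0]\le\prod_{i\in S}Pr[X_i=0]$ is an additional hypothesis, not a consequence of Property~\ref{prop1}, so \eqref{eqn:chernoff-lower} is only obtained under a strictly stronger assumption than the theorem announces. That said, your diagnosis is accurate and your counterexample is valid: for $X_1,X_2$ independent fair bits and $X_3=X_1\oplus X_2$, Property~\ref{prop1} holds, yet $E[t^X]=\tfrac14+\tfrac34t^2>\bigl(\tfrac{1+t}{2}\bigr)^3=\prod_iE[t^{X_i}]$ for $t<1$ (e.g.\ $7/16>27/64$ at $t=1/2$). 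This shows the factorization is genuinely false for $\lambda<0$ under Property~\ref{prop1} alone; the paper's Claim~\ref{claim-chernoff}, which asserts it for all $\lambda$, silently uses nonnegativity of the coefficients $\lambda^r$ in its term-by-term comparison and is therefore only valid for $\lambda\ge0$ --- exactly the regime needed for the upper tail but not the lower one. So the idea you are ``missing'' does not exist by this route; your proposed repair (adding the complementary product condition, the other half of the standard negative-dependence package, which does hold for negatively associated variables) is the standard one, though it leaves open whether \eqref{eqn:chernoff-lower} holds under Property~\ref{prop1} alone. Two small points: make explicit that what you apply to $m-X$ is the \emph{factorization} step for the complementary indicators, yielding $E[e^{-\lambda X}]\le\prod_iE[e^{-\lambda X_i}]$ and then the usual optimization at $\lambda=-\ln(1-\epsilon)$ with constant $\epsilon^2/2$; plugging $m-X$ into the finished upper-tail bound instead gives only $\exp\bigl(-\epsilon^2E[X]^2/(3(m-E[X]))\bigr)$, which is weaker. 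And note that the paper's use of \eqref{eqn:chernoff-lower} for the singleton indicators in Lemma~\ref{lem:concentration} inherits this issue, since only Property~\ref{prop1} is verified for them.
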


\noindent We are  interested in the \irvs $I_j$, where:
\begin{equation*}
  I_j=\begin{cases}
    1, & \text{if bin $j$ contains exactly $1$ ball}.\\
    0, & \text{otherwise}.
  \end{cases}
\end{equation*}
\noindent 
\noindent Unfortunately, there are cases where the $I_j$s fail to satisfy Property~\ref{prop1}. For example, consider $N=2$ balls and $B=2$ bins. Then, $Pr(I_1=1) = Pr(I_2=1)=1/2$, so $Pr(I_1=1) \cdot Pr(I_2=1)=1/4$,  but $Pr( I_1 =1 \wedge I_2=1) = 1/2$.

A naive approach (although, we have not seen it in the literature) is to leverage the result in~\cite{lenzen:tight}, that the variables used to count the number of bins with {\underline{at most}} $k$ balls are negatively associated. We may bound the number of bins that have at most $1$ ball, and the number of bins that have (at most) $0$ balls, and then take the difference. However, this is a cumbersome approach, and our result is more direct. 

Returning briefly to the context of packets and time slots, another approach is to consider a subtly-different algorithm where a packet sends with probability $1/w$ in each slot of a window with $w$ slots, rather than selecting uniformly at random a single slot to send in. However, as Bender et al.~\cite{BenderFaHe05} point out, when $n$ is within a constant factor of the window size, there is a constant probability that the packet will not send in {\it any} slot. Consequently, the number of windows required for all packets to succeed increases by a $\log n$-factor, whereas only $O(\log\log n)$ windows are required under the model used here.



\subsection{Property~\ref{prop1} and Bounding Singletons}

To prove Theorem~\ref{thm:neg-assoc-result}, we establish the following Lemma \ref{thm:cond-prob}. For $j=1,\cdots,B-1$, define:
 $$\mathcal P_j = Pr\left[I_{j+1}=1 \mid I_1=1,\cdots, I_j=1\right]$$
 \noindent which is the conditional probability that bin $j+1$ contains exactly 1 ball given each of the bins $\{1,\cdots,j\}$ contains exactly 1 ball. Note that $Pr[I_j=1]$ is same for any $j=1,\cdots,B$, and let:
 \begin{equation}\label{equ:P0}
\mathcal P_0 \triangleq Pr[I_j=1] = N\left(\frac{1}{B}\right)\left(1-\frac{1}{B}\right)^{N-1}.
\end{equation}

\begin{lemma}\label{thm:cond-prob}
 If $B\geq N+\sqrt{N}$ or $B\leq N-\sqrt{N}$, the conditional probability $\mathcal P_j$ is a monotonically non-increasing function of $j$, i.e., $\mathcal P_j \geq \mathcal P_{j+1}$, for $j=0,\cdots,B-2$.
\end{lemma}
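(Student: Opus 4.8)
The plan is to compute $\mathcal{P}_j$ explicitly and then analyze it as a function of $j$. Condition on bins $1,\dots,j$ each containing exactly one ball. This pins down $j$ specific balls (one per bin), leaving $N-j$ balls distributed uniformly and independently among the remaining $B-j$ bins. So $\mathcal{P}_j$ is the probability that a designated bin (bin $j+1$) among these $B-j$ bins receives exactly one of the $N-j$ remaining balls:
\begin{equation*}
\mathcal{P}_j = (N-j)\left(\frac{1}{B-j}\right)\left(1-\frac{1}{B-j}\right)^{N-j-1}.
\end{equation*}
This reduces the claim to a one-variable monotonicity statement: writing $m = N-j$ and $b = B-j$, we want $f(m,b) := m\,b^{-1}(1-b^{-1})^{m-1}$ to be non-increasing as $j$ increases by $1$, i.e. as $(m,b) \mapsto (m-1,b-1)$.

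\textbf{Reducing to a ratio inequality.}

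Next I would study the ratio $\mathcal{P}_{j+1}/\mathcal{P}_j$ and show it is $\le 1$ under the hypothesis. Substituting the closed form, this ratio equals
\begin{equation*}
\frac{\mathcal{P}_{j+1}}{\mathcal{P}_j} = \frac{(m-1)}{m}\cdot\frac{b}{(b-1)}\cdot\frac{(1-\frac{1}{b-1})^{m-2}}{(1-\frac{1}{b})^{m-1}},
\end{equation*}
and the goal is to prove this is at most $1$. Taking logarithms turns the product into a sum, and I expect the inequality to follow from Fact~\ref{fact:taylor}: bound $(1-\frac{1}{b-1})^{m-2}$ above using the RHS inequality $1-x \le e^{-x}$, and bound $(1-\frac{1}{b})^{m-1}$ below using the LHS inequality $e^{-x/(1-x)} \le 1-x$. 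After these substitutions the exponential factors combine into something like $\exp\!\big(\frac{m-1}{b-1} - \frac{m-2}{b-1}\big)$-type expressions, and the remaining rational prefactor $\frac{(m-1)b}{m(b-1)}$ must be controlled against it. The hypothesis enters precisely here: $B \ge N + \sqrt{N}$ (equivalently $b \ge m + \sqrt{m}$, which is preserved under $(m,b)\mapsto(m-1,b-1)$ since $\sqrt{m-1} \le \sqrt{m}$) or $B \le N - \sqrt{N}$ gives a lower bound on $|b-m|$ relative to $\sqrt{m}$, which is exactly what makes the exponential term dominate the polynomial correction. I would handle the two regimes ($b > m$ and $b < m$) separately, since the signs in the Taylor bounds flip.

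\textbf{Anticipated obstacle.}

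The main obstacle is the delicate trade-off in the regime where $b$ is close to $m$, i.e. near the boundary $b = m \pm \sqrt{m}$: there the rational prefactor is close to $1$ and the exponential correction is only of order $e^{\pm 1/\sqrt{m}}$, so the two competing factors are both within $1 + \Theta(1/\sqrt{m})$ of $1$, and a crude application of Fact~\ref{fact:taylor} may lose the fight. I expect to need a second-order refinement — either keeping the quadratic term in $\ln(1-x) = -x - x^2/2 - \cdots$, or using the sharper two-sided bound from Fact~\ref{fact:taylor} on \emph{both} factors simultaneously rather than one-sidedly — to show the exponential gain of order $\exp(-(b-m)/b^2 \cdot m)$-ish beats the polynomial loss. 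Once the boundary case is settled, the interior (where $|b-m|$ is large) should be comfortable. A useful sanity check along the way: at $b = m$ exactly, $f(m,b)$ should be \emph{increasing} in $j$ (the inequality fails), which matches the excerpt's counterexample with $N=B=2$ and confirms that the $\sqrt{N}$ gap in the hypothesis is doing real work.
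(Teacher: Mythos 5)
Your setup coincides with the paper's: the same closed form $\mathcal P_j=(N-j)\frac{1}{B-j}\bigl(1-\frac{1}{B-j}\bigr)^{N-j-1}$, the same reduction to showing the ratio of consecutive terms is on the right side of $1$. But the core inequality is where your plan stops being a proof, and your own "anticipated obstacle" paragraph correctly diagnoses why: near the boundary the first-order bounds of Fact~\ref{fact:taylor} are too lossy, and you do not carry out the second-order refinement you say would be needed. The situation is in fact worse than "delicate." Writing $a=N-j$ and $y=B-N$, the exact ratio is
\begin{equation*}
\frac{\mathcal P_j}{\mathcal P_{j+1}} \;=\; 1+\frac{y^2-a}{(a+y)^2(a-1)}+\frac{\sum_{k= 2}^{a-1}\binom{a-1}{k}\bigl[\tfrac{1}{(a+y)(a+y-2)}\bigr]^{k}}{\tfrac{(a-1)(a+y)}{a(a+y-2)}},
\end{equation*}
so at $j=0$ with $B=N+\sqrt{N}$ exactly (i.e.\ $y^2=a$) the first correction term vanishes identically and the ratio exceeds $1$ only by the third term, which is $\Theta(1/N^2)$. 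Any approximation scheme must therefore be accurate to $o(1/N^2)$; the one-sided bounds $e^{-x/(1-x)}\le 1-x\le e^{-x}$ introduce errors of order $x^2\cdot(m-1)\approx 1/N$ in the exponent, which is orders of magnitude too coarse. Even a careful second-order expansion of $\ln(1-x)$ leaves you fighting $O(1/N^2)$ error terms against an $O(1/N^2)$ margin, with no slack. So the gap is genuine: the decisive inequality is asserted as an expectation, and the proposed tool cannot deliver it at the boundary cases the hypothesis is designed to capture.

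The paper avoids approximation entirely. After the substitution $a=N-j$, $y=B-N$, it massages the ratio into the form $\bigl[1+\frac{1}{(a+y)(a+y-2)}\bigr]^{a-1}\big/\frac{(a-1)(a+y)}{a(a+y-2)}$ and expands the numerator \emph{exactly} by the binomial theorem, keeping the $k=0$ and $k=1$ terms explicitly and observing that the tail $k\ge 2$ contributes a manifestly non-negative quantity. The algebra then produces the displayed identity above, and the hypothesis $|y|\ge\sqrt{N}$ enters only through the elementary observation $y^2\ge N\ge a$, which makes the middle term non-negative. If you want to rescue your route, the lesson is to replace the exponential bounds by an exact algebraic expansion of the ratio; the $\sqrt{N}$ threshold is not an artifact of estimation slack but the exact sign condition $y^2\ge a$ on the leading correction.
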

\begin{proof}
First, for $j=1,\cdots,\min\{B,N\}-1$, the conditional probability can be expressed as
\begin{equation}\label{equ:cond-prob}
\mathcal P_j = (N-j)\left(\frac{1}{B-j}\right)\left(1-\frac{1}{B-j}\right)^{N-j-1}.
\end{equation}
Note that $\mathcal P_0$ in (\ref{equ:P0}) is equal to (\ref{equ:cond-prob}) with $j=0$.

For $B\geq N+\sqrt{N}$, we note that beyond the range $j=1, ..., \min\{B,N\}-1$, it must be that $\mathcal P_j=0$. In other words, $\mathcal P_j=0$ for $j=N,N+1,\cdots,B-1$ since all balls have already been placed. Thus, we need to prove $\mathcal P_j \geq \mathcal P_{j+1}$, for $j=0,\cdots,N-2$. 

On the other hand, if $B\leq N-\sqrt{N}$, we need to prove $\mathcal P_j \geq \mathcal P_{j+1}$, for $j=0,\cdots,B-2$. Thus, this lemma is equivalent to prove  if $B\geq N+\sqrt{N}$ or $B\leq N-\sqrt{N}$, the ratio $\mathcal P_j/\mathcal P_{j+1}\geq 1$, for $j=0,\cdots,\min\{B,N\}-2$. 

\noindent Using the Equation (\ref{equ:cond-prob}), the ratio can be expressed as:
{\begin{align*}
\frac{\mathcal P_j}{\mathcal P_{j+1}}& =   \frac{(N-j)\left(\frac{1}{B-j}\right)\left(1-\frac{1}{B-j}\right)^{N-j-1}}{(N-j-1)\left(\frac{1}{B-j-1}\right)\left(1-\frac{1}{B-j-1}\right)^{N-j-2}} = \frac{\left(1+\frac{1}{(B-j)(B-j-2)}\right)^{N-j-1}}{\frac{(N-j-1)(B-j)}{(N-j)(B-j-2)}}.
\end{align*} }

\noindent Let $a = N-j$, then $2 \leq a \leq N$; and let $y = B-N$. Thus, the ratio becomes $\frac{\mathcal P_j}{\mathcal P_{j+1}} = \frac{\left[1+\frac{1}{(a+y)(a+y-2)}\right]^{a-1}}{\frac{(a-1)(a+y)}{a(a+y-2)}}$.
%
By the Binomial theorem, we have:
{\small \begin{eqnarray*}
\left[1+\frac{1}{(a+y)(a+y-2)}\right]^{a-1} \hspace{-22pt} & =& \hspace{-10pt}1 \hspace{-2pt}+\hspace{-2pt} \frac{a-1}{(a+y)(a+y-2)} \hspace{-2pt}+\hspace{-2pt}\sum_{k=2}^{a-1}\binom{a-1}{k}\hspace{-4pt} \left[\frac{1}{(a+y)(a+y-2)}\right]^{k}\hspace{-4pt}.
\end{eqnarray*}
}
Thus, the ratio can be written as:
{\small \begin{eqnarray}
\frac{\mathcal P_j}{\mathcal P_{j+1}} &=& 1 + \frac{y^2-a}{(a+y)^2(a-1)} + \frac{\sum_{k=2}^{a-1}\binom{a-1}{k} \left[\frac{1}{(a+y)(a+y-2)}\right]^{k}}{\frac{(a-1)(a+y)}{a(a+y-2)}}.\label{equ:ratio}
\end{eqnarray}
}
\noindent{}Note that because $0\leq j \leq \min\{B,N\}-2$, then $a+y=B-j\geq 2$. Thus, the third term in (\ref{equ:ratio}) is always non-negative. If $y=B-N\geq \sqrt{N}$ or $y\leq -\sqrt{N}$, then $y^2\geq N \geq a$ for any $2 \leq a \leq N$. Consequently, the ratio $\mathcal P_j /\mathcal P_{j+1}\geq 1$.
\end{proof}


\noindent{}We can now give our main argument:

\begin{proof}[Proof of Theorem~\ref{thm:neg-assoc-result}] Let $s$ denote the size of the subset $\mathbbm S \subset \{1,\cdots,B\}$, i.e. the number of bins in $\mathbbm S$. First, note that if $B\geq N+\sqrt{N}$, when $s>N$ (i.e., more bins than balls), the probability on the left hand side (LHS) of (\ref{equ:neg-asso-defn}) is 0, thus, the inequality (\ref{equ:neg-asso-defn}) holds. In addition, shown above $Pr[I_j=1] = \mathcal P_0$ for any $j=1,\cdots,B$. Thus, the right hand side of  (\ref{equ:neg-asso-defn}) becomes $\mathcal P_0^s$. Thus, we need to prove for any subset, denoted as $\mathbbm S = \{j_1,\cdots,j_s\}$ with $1 \leq s \leq \min\{B,N\}$ 
\begin{eqnarray*}
Pr\left[\bigwedge\limits _{k=1}^s I_{j_k}=1 \right]  \leq \mathcal P_0^s.
\end{eqnarray*}
\noindent The LHS can be written as:
\begin{eqnarray*}
 &=& Pr\left[I_{j_s}=1 \mid \bigwedge\limits _{k=1}^{s-1} I_{j_k}=1\right]Pr\left[\bigwedge\limits _{k=1}^{s-1} I_{j_k}=1\right] \\
&& \hspace{10pt}= \mathcal P_{s-1}Pr\left[\bigwedge\limits _{k=1}^{s-1} I_{j_k}=1\right] 
  \end{eqnarray*}
 \begin{eqnarray*}
&\hspace{-15pt} =& \hspace{-10pt} \mathcal P_{s-1} Pr\left[I_{j_{s-1}}=1 \mid \bigwedge\limits _{k=1}^{s-2} I_{j_k}=1\right] Pr\left[\bigwedge\limits _{k=1}^{s-2} I_{j_k}=1\right]\\
 && \hspace{10pt} = \mathcal P_{s-1} \mathcal P_{s-2} Pr\left[\bigwedge\limits _{k=1}^{s-2} I_{j_k}=1\right]\\
& \hspace{-15pt}&    \hspace{100pt} \vdots\\
&\hspace{-15pt} = & \hspace{-10pt}\mathcal P_{s-1} \mathcal P_{s-2} \cdots \mathcal P_0
\end{eqnarray*}
Lemma \ref{thm:cond-prob} shows that if $B\geq N+\sqrt{N}$ or $B\leq N-\sqrt{N}$, $\mathcal P_j$ is a decreasing function of $j=0,\cdots,B-1$. Consequently, $\mathcal P_0 \geq \mathcal P_j$, for $j=1,\cdots,B-1$. Thus:
$$
Pr\left[\bigwedge\limits _{k=1}^s I_{j_k}=1 \right]  \leq \mathcal P_0^s,
$$
\noindent and so the bound in Equation (\ref{equ:neg-asso-defn}) holds.
\end{proof}


The standard Chernoff bounds of Theorem~\ref{thm:dhubashi-panconesi} now apply, and we use them obtain bounds on the number of singletons. For ease of presentation, we occasionally use $\exp(x)$ to denote $e^{x}$.
 
\begin{lemma}\label{lem:concentration}
For $N$ balls that are dropped into $B$ bins where $B\geq N+\sqrt{N}$ or $B\leq N-\sqrt{N}$, the following is true for any  $0 < \epsilon < 1$.
\begin{itemize}[leftmargin=10pt]
\item \hspace{-2pt}The number of singletons is at least $\frac{(1-\epsilon)N}{e^{N/(B-1)}}$ with probability at least $1-$ $e^{\frac{-\epsilon^2 N }{2\exp(N/(B-1))}}$.
\item \hspace{-2pt}The number of singletons is at most $\frac{(1+\epsilon)N}{e^{(N-1)/B}}$  with probability at least $1- e^{\frac{-\epsilon^2 N }{3\exp(N/(B-1))}}$.
\end{itemize}
\end{lemma}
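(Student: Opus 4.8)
The plan is to apply the Chernoff bounds of Theorem~\ref{thm:dhubashi-panconesi} to the sum $I = \sum_{j=1}^B I_j$, which is legitimate because Theorem~\ref{thm:neg-assoc-result} guarantees that $\{I_1,\dots,I_B\}$ satisfy Property~\ref{prop1} under the hypothesis $B \geq N+\sqrt{N}$ or $B \leq N-\sqrt{N}$. The first task is to pin down $E[I]$. By linearity, $E[I] = B \cdot \mathcal P_0 = N(1 - 1/B)^{N-1}$, using the expression for $\mathcal P_0$ from Equation~(\ref{equ:P0}). The two bullets then follow by sandwiching this exact expectation between convenient exponential quantities via Fact~\ref{fact:taylor}, and feeding the resulting bounds into (\ref{eqn:chernoff-lower}) and (\ref{eqn:chernoff-upper}) respectively.

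First I would handle the lower tail. Applying the RHS of Fact~\ref{fact:taylor} with $x = 1/B$ gives $(1-1/B)^{N-1} \geq$ \dots\ actually the direction needed is a \emph{lower} bound on $E[I]$, so I would instead write $1 - 1/B \geq e^{-(1/B)/(1-1/B)} = e^{-1/(B-1)}$ (the LHS of Fact~\ref{fact:taylor} with $x=1/B$), hence $E[I] = N(1-1/B)^{N-1} \geq N e^{-(N-1)/(B-1)} \geq N e^{-N/(B-1)} = N/e^{N/(B-1)}$. Plugging into (\ref{eqn:chernoff-lower}), $\Pr[I < (1-\epsilon)E[I]] \leq \exp(-\epsilon^2 E[I]/2)$, and since $(1-\epsilon)E[I] \geq (1-\epsilon)N/e^{N/(B-1)}$ and $E[I] \geq N/e^{N/(B-1)} = N/\exp(N/(B-1))$ in the exponent, monotonicity of the tail bound gives that the number of singletons is at least $(1-\epsilon)N/e^{N/(B-1)}$ with probability at least $1 - \exp(-\epsilon^2 N/(2\exp(N/(B-1))))$, as claimed.

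For the upper tail I would use the RHS of Fact~\ref{fact:taylor}, $1 - 1/B \leq e^{-1/B}$, to get $E[I] = N(1-1/B)^{N-1} \leq N e^{-(N-1)/B} = N/e^{(N-1)/B}$. Applying (\ref{eqn:chernoff-upper}) yields $\Pr[I > (1+\epsilon)E[I]] \leq \exp(-\epsilon^2 E[I]/3)$; then $(1+\epsilon)E[I] \leq (1+\epsilon)N/e^{(N-1)/B}$ gives the stated bound on the count, while in the exponent I would lower-bound $E[I] \geq N/e^{N/(B-1)}$ (from the lower-tail computation) to obtain the clean probability $1 - \exp(-\epsilon^2 N/(3\exp(N/(B-1))))$. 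The only mild subtlety — and the step most likely to need care — is bookkeeping the two \emph{different} exponential quantities: the ``target'' value of the count uses the tighter exponent ($(N-1)/B$ for the upper bullet, $N/(B-1)$ for the lower), whereas the failure probability is expressed uniformly in terms of the looser lower bound $N/\exp(N/(B-1))$ on $E[I]$; one must check that substituting a smaller value of $E[I]$ into $\exp(-c\,\epsilon^2 E[I])$ only \emph{weakens} the bound, so the stated inequality remains valid. Everything else is a direct instantiation of Fact~\ref{fact:taylor} and Theorem~\ref{thm:dhubashi-panconesi}, with no further obstacles.
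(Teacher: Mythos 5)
Your proposal is correct and follows essentially the same route as the paper: bound $E[I]=N(1-1/B)^{N-1}$ on both sides via Fact~\ref{fact:taylor}, invoke Theorem~\ref{thm:neg-assoc-result} to justify the Chernoff bounds of Theorem~\ref{thm:dhubashi-panconesi}, and use the looser lower bound on $E[I]$ in the exponent. Your explicit check that replacing $E[I]$ by a smaller value only weakens $\exp(-c\,\epsilon^2 E[I])$, and your use of the LHS of Fact~\ref{fact:taylor} for the lower bound on the expectation, are both right (the paper's citation of the RHS there appears to be a slip).
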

\begin{proof}
We begin by calculating the expected number of singletons. Let $I_i$ be an indicator random variable such that $I_i=1$ if bin $i$ contains a single ball; otherwise, $I_i=0$. Note that:
\begin{eqnarray}
Pr(I_i=1) & = & \binom{N}{1}\left(\frac{1}{B}\right) \left( 1- \frac{1}{B}\right)^{N-1}\nonumber\\
 & \geq & \binom{N}{1}\left(\frac{1}{B}\right) \left( 1- \frac{1}{B}\right)^{N}\nonumber\\ 
&\geq& \frac{N}{Be^{(N/(B-1))}}\label{eqn:problower}
\end{eqnarray}
\noindent where the last line follows from the LHS of Fact~\ref{fact:taylor}. Let $I = \sum_{i=1}^{B} I_i$ be the number of singletons. We have:
\begin{eqnarray*}
E[I] & = & \sum_{i=1}^{B} E[I_i]  \mbox{~~~by linearity of expectation}\\
& \geq & \frac{N}{e^{(N/(B-1))}}  \mbox{~~~by Equation~(\ref{eqn:problower})} 
\end{eqnarray*}
\noindent  Next, we derive a concentration result around this expected value. Since $B\geq N+\sqrt{N}$ or $B\leq N-\sqrt{N}$, Theorem~\ref{thm:neg-assoc-result} guarantees that the $I_i$s are negatively associated, and we may apply the Chernoff bound in Equation~\ref{eqn:chernoff-lower} to obtain:
\begin{eqnarray*}
Pr\left(I < (1-\epsilon) \frac{N}{e^{(N/(B-1))}}  \right) \hspace{-5pt}&\leq& \hspace{-5pt}\exp\left( -\frac{\epsilon^2 N }{2e^{(N/(B-1))}} \right)\\
\end{eqnarray*}
\noindent which completes the lower-bound argument. 

For the upper bound, we have:
\begin{eqnarray}
Pr(I_i=1) & = & \binom{N}{1}\left(\frac{1}{B}\right) \left( 1- \frac{1}{B}\right)^{N-1}\nonumber\\
 & \leq& \left(\frac{N}{B}\right) e^{-(N-1)/B}\nonumber\\ 
\end{eqnarray} 
\noindent where the inequality follows from the RHS of Fact~\ref{fact:taylor}. The expected value calculation and the application of Chernoff bounds follow by a nearly identical argument to the one presented above.
\end{proof}

\noindent We note that our makespan analysis makes use of the lower bound given in Lemma~\ref{lem:concentration}, since we are concerned with ensuring singletons occur. Throughout, we assume this lower bound is being used when Lemma~\ref{lem:concentration} is invoked. The upper bound is given for completeness and may be useful for proving lower bounds on makespan, as mentioned in Section~\ref{sec:conclusion}.


\section{Bounding Remaining Packets}\label{sec:bounding-remaining-packets}

In this section, we derive tools for bounding the number of packets that remain as we progress from one window to the next. 

All of our results hold for sufficiently large $n>0$.  Let {\boldmath{$w_i$}} denote the number of slots in window $i\geq 0$. Let {\boldmath{$m_i$}} be the number of packets at the start of window $i\geq 0$.

We index windows starting from $0$, but this does not necessarily correspond to the initial window executed by a backoff algorithm. Rather, in our analysis, window $0$ corresponds to the first window where packets start to succeed in large numbers; this is different for different backoff algorithms.

For example, \beb's initial window consists of a single slot, and does not play an important role in the makespan analysis. Instead, we apply Chernoff bounds once the window size is at least $n+\sqrt{n}$, and this corresponds to window $0$. In contrast, for \fb, the first window (indeed, {\it each} window) has size $\Theta(n)$, and window $0$ is indeed this first window for our analysis. This indexing is useful for our inductive arguments presented in Section~\ref{sec:induction}.\vspace{-5pt}

\subsection{Analysis}\vspace{-3pt}

Our method for upper-bounding the makespan operates in three stages. First, we apply an inductive argument---employing Case 1 in Corollary~\ref{cor:remaining} below---to cut down the number of packets from $n$ to less than $n^{7/10}$.  Second, Case 2 of Corollary~\ref{cor:remaining} is used to reduce the number of remaining packets  to $O(n^{2/5})$. Third, we hit the remaining packets with a constant number of calls to Lemma~\ref{lem:last-window}; this is the essence of Lemma~\ref{lem:end}.\medskip

\noindent{\bf Intuition for Our Approach.} There are a couple things worth noting. To begin, why not carry the inductive argument further to reduce the number of packets to $O(n^{2/5})$ directly (i.e., skip the second step above)? Informally, our later inductive arguments show that $m_{i+1}$ is roughly at most $n/2^{2^{i}}$, and so $i\approx \lg\lg(n)$ windows should be sufficient. However,  $\lg\lg(n)$ is not necessarily an integer and we may need to take its floor. Given the double exponential, taking the floor (subtracting $1$) results in $m_{i+1} \geq \sqrt{n}$. Therefore, the equivalent of our second step will still be required. Our choice of $n^{7/10}$ is not the tightest, but it is chosen for simpicity.

The second threshold of $O(n^{2/5})$ is also not completely arbitrary.  In the (common) case where $w_0 \geq n+\sqrt{n}$, note that we require $O(n^{1/2-\delta})$ packets remaining, for some constant $\delta>0$, in order to get a useful bound from Lemma~\ref{lem:last-window}. It is possible that after the inductive argument, that this is already satisfied; however, if not, then Case 2 of Corollary~\ref{cor:remaining} enforces this. Again, $O(n^{2/5})$ is chosen for ease of presentation; there is some slack.

\begin{corollary}\label{cor:remaining}
For $w_i\geq n+\sqrt{n}$, the following is true with probability at least $1-1/n^2$:\vspace{-5pt}
\begin{itemize}[leftmargin=4mm]
\item{\it Case 1.} If $m_i\geq n^{7/10}$, then $m_{i+1} < \frac{(5/4)m_i^2}{n}$.\vspace{-5pt}
\item{\it Case 2.} If\hspace{-1pt} $n^{2/5}\hspace{-2pt} \leq\hspace{-2pt}  m_i\hspace{-2pt}  < \hspace{-2pt} n^{7/10}$, \hspace{-2pt} then $m_{i+1}\hspace{-2pt} =\hspace{-2pt} O(n^{2/5})$. \vspace{-5pt}
\end{itemize}
\end{corollary}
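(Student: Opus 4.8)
The plan is to recognize that a single window of a windowed backoff algorithm is literally an instance of the balls-in-bins process, and then to run the lower tail of Lemma~\ref{lem:concentration} through a short calculation. In window $i$, each of the $m_i$ not-yet-succeeded packets selects a slot in $\{1,\dots,w_i\}$ uniformly and independently, which is exactly $N=m_i$ balls dropped into $B=w_i$ bins; a packet survives to window $i+1$ if and only if its slot is not a singleton, so $m_{i+1}=m_i-I$, where $I$ is the number of singleton bins. Because packets only leave the system, $m_i\le n$ for every $i$, hence $w_i\ge n+\sqrt n\ge m_i+\sqrt{m_i}$; thus Theorem~\ref{thm:neg-assoc-result} applies to the singleton indicators, and Lemma~\ref{lem:concentration} may be used to lower-bound $I$.

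Next I would prove a single ``master inequality'' that drives both cases. Let $0<\epsilon<1$ be a parameter to be chosen per case. By the lower-tail part of Lemma~\ref{lem:concentration}, with probability at least $1-\exp\!\big(-\epsilon^2 m_i/(2\,e^{m_i/(w_i-1)})\big)$ we have $I\ge (1-\epsilon)\,m_i\,e^{-m_i/(w_i-1)}$, and therefore
\begin{equation*}
m_{i+1}=m_i-I\ \le\ m_i\Big(1-(1-\epsilon)\,e^{-m_i/(w_i-1)}\Big)\ \le\ m_i\Big(\epsilon+\tfrac{m_i}{w_i-1}\Big)\ \le\ \epsilon\, m_i+\frac{m_i^2}{n},
\end{equation*}
where the middle step uses $e^{-t}\ge 1-t$ (the right-hand inequality of Fact~\ref{fact:taylor}) and the last step uses $w_i-1\ge n$. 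Since $m_i\le n$ we also have $m_i/(w_i-1)<1$, so $e^{m_i/(w_i-1)}<e$, which I will use to simplify the failure probabilities.

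For Case~1 ($m_i\ge n^{7/10}$), take $\epsilon=\tfrac18 n^{-3/10}$; then $\epsilon\, m_i=\tfrac18 m_i n^{-3/10}\le \tfrac18\, m_i^2/n$ because $m_i\ge n^{7/10}$, so the master inequality gives $m_{i+1}\le \tfrac98\, m_i^2/n<\tfrac54\, m_i^2/n$, and the failure probability is at most $\exp(-\epsilon^2 m_i/6)=\exp(-\Omega(n^{1/10}))\le 1/n^2$ for large $n$. Case~2 ($n^{2/5}\le m_i<n^{7/10}$) is the delicate one: when $m_i$ is close to $n^{2/5}$, the expected number of singletons $E[I]$ is nearly $m_i$ but the slack $\epsilon\, E[I]$ one is allowed is so small that a fixed sub-constant $\epsilon$ makes $\epsilon^2 E[I]$ subpolynomial and the Chernoff bound vacuous. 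I would resolve this by splitting on $m_i$: if $m_i\le 2n^{2/5}$ then $m_{i+1}\le m_i=O(n^{2/5})$ deterministically; and if $2n^{2/5}<m_i<n^{7/10}$, take $\epsilon=2n^{2/5}/m_i\in(0,1)$, so that $\epsilon\, m_i=2n^{2/5}$ and $m_i^2/n<n^{2/5}$, giving $m_{i+1}<3n^{2/5}$, while $\epsilon^2 m_i=4 n^{4/5}/m_i>4n^{1/10}$ keeps the failure probability at $\exp(-\epsilon^2 m_i/6)=\exp(-\Omega(n^{1/10}))\le 1/n^2$.

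The main obstacle is precisely this low-$m_i$ regime of Case~2: a naive application of Chernoff with one small $\epsilon$ fails because both $E[I]$ and the permitted downward deviation are small. The remedy is the two ingredients above --- peeling off the range $m_i=O(n^{2/5})$, where the conclusion is immediate from $m_{i+1}\le m_i$, and, on the remaining range, scaling $\epsilon$ as $\Theta(n^{2/5}/m_i)$ so that $\epsilon$ stays below $1$ while $\epsilon^2 m_i$ stays polynomially large. Everything else is the routine algebra of the master inequality and checking that $\exp(-\Omega(n^{1/10}))\le 1/n^2$ for sufficiently large $n$.
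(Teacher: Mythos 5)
Your proof is correct and follows essentially the same route as the paper's: both invoke the lower tail of Lemma~\ref{lem:concentration} (justified by $w_i\ge n+\sqrt n\ge m_i+\sqrt{m_i}$), reduce via Fact~\ref{fact:taylor} to the same master inequality $m_{i+1}\le \epsilon m_i+m_i^2/n$, and then choose $\epsilon$ per case. The only difference is in bookkeeping: the paper takes $\epsilon=\sqrt{4e\ln n}/n^{1/3}$ in Case~1 and $\epsilon=\sqrt{4e\ln n/m_i}$ in Case~2 (the latter already keeps $\epsilon<1$ and the failure probability at $1/n^2$ for all $m_i\ge n^{0.4}$, so your extra split at $m_i\le 2n^{2/5}$ is not needed), whereas you use polynomial choices of $\epsilon$; both are valid.
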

\begin{proof}
For Case 1, we apply the first result of Lemma~\ref{lem:concentration} with $\epsilon =\frac{ \sqrt{4e\ln n}}{n^{1/3}}$, which implies with probability at least $1 - \exp(-\frac{4e\ln n}{n^{2/3}}\frac{n^{7/10}}{2}) \geq 1 - \exp(-2\ln n)\geq 1-1/n^2$: \vspace{-10pt}

\begin{eqnarray} 
m_{i+1} &\leq & m_i - \frac{(1-\epsilon)m_i}{e^{m_i/(w_i-1)}}\nonumber \\
& \leq & m_i\left(1 - \frac{1}{e^{m_i/(w_i-1)}} + \epsilon \right) \nonumber \\
& \leq & m_i\left(\frac{m_i}{w_i-1} + \epsilon \right)\nonumber  \mbox{~~by RHS of Fact~\ref{fact:taylor}} \\
& \leq & \frac{m_i^2}{n} +  m_i\epsilon \mbox{~~~since $w_i\geq n+\sqrt{n}$}\nonumber\\
& \leq & \frac{m_i^2}{n} + \left( \frac{m_i}{n^{1/3}}\right)\sqrt{4e\ln n}\label{eqn:caseblahh}\\
& <  & \frac{(5/4)m_i^2}{n} \mbox{~~~since $m_i\geq n^{7/10}$}\nonumber
\end{eqnarray}
\noindent where $5/4$ is chosen for ease of presentation.

For Case 2, the two terms in Equation~\ref{eqn:caseblahh} are $n^{2/5}$ and $O(n^{(7/10) - (1/3)}\sqrt{\ln n})$, respectively, for the any  $n^{2/5}\leq m_i  \leq n^{7/10}$; thus, $m_{i+1} = O(n^{2/5})$, as claimed.
\end{proof}

\noindent The following lemma is useful for achieving a with-high-probability guarantee when the number of balls is small relative to the number of bins.\smallskip

\begin{lemma}\label{lem:last-window}
Assume $w_i > 2m_i$. With probability at least $1 - \frac{m_i^2}{w_i} $, all packets succeed in window $i$.
\end{lemma}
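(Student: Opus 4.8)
The plan is to bound the probability of the complementary event---that at least one of the $m_i$ packets fails in window $i$---by a union bound over pairs of packets. Fix window $i$, write $w = w_i$ and $m = m_i$ for brevity, and recall that each of the $m$ packets independently picks a slot uniformly at random from the $w$ slots of the window. A packet fails exactly when some other packet picks the same slot; hence the event ``some packet fails'' is precisely the event ``two packets collide,'' i.e., there exist distinct packets $p \neq q$ that select the same slot.

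First I would bound $Pr[\text{some collision}]$ by $\sum_{p < q} Pr[\text{$p$ and $q$ pick the same slot}]$. Each pairwise collision probability is exactly $1/w$, since once $p$'s slot is fixed, $q$ lands on it with probability $1/w$. There are $\binom{m}{2}$ pairs, so
\begin{equation*}
Pr[\text{some collision}] \;\leq\; \binom{m}{2}\cdot\frac{1}{w} \;=\; \frac{m(m-1)}{2w} \;<\; \frac{m^2}{2w}.
\end{equation*}
Using the hypothesis $w > 2m$, which gives $m/(2w) < 1/4$, one could even sharpen this, but the stated bound $m_i^2/w_i$ already follows (indeed with room to spare, since $m(m-1)/(2w) \le m^2/(2w) \le m_i^2/w_i$). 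Taking complements, all packets succeed with probability at least $1 - m_i^2/w_i$, as claimed. The hypothesis $w_i > 2m_i$ is what makes this bound nontrivial (it forces $m_i^2/w_i < m_i/2 < 1$ in the regime of interest), and it is exactly the regime in which Lemma~\ref{lem:last-window} will be invoked downstream.

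I do not expect any real obstacle here: this is a textbook birthday-bound / union-bound argument, and the only thing to be careful about is the logical equivalence between ``every packet succeeds'' and ``no two packets share a slot''---which holds because a packet succeeds in its chosen slot if and only if it is the unique packet in that slot. One subtlety worth a sentence in the writeup: the events ``$p$ and $q$ collide'' over different pairs are not independent, but the union bound needs no independence, so this causes no difficulty. If a cleaner constant were desired, one could instead condition on the slots chosen one packet at a time and use $\prod_{t=0}^{m-1}(1 - t/w) \ge 1 - \sum_{t=0}^{m-1} t/w = 1 - m(m-1)/(2w)$, but the pairwise union bound is the most transparent route and suffices.
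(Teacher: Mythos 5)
Your proof is correct, but it takes a genuinely different route from the paper's. The paper computes the \emph{exact} probability that all $m_i$ packets land in distinct slots as the falling-factorial ratio $w_i(w_i-1)\cdots(w_i-m_i+1)/w_i^{m_i}$, and then lower-bounds the product $\prod_{j=1}^{m_i-1}(1-j/w_i)$ using the two-sided exponential inequality of Fact~\ref{fact:taylor}; the hypothesis $w_i > 2m_i$ enters precisely there, to convert $j/(w_i-j)$ into $2j/w_i$ before summing. You instead bound the complementary event by a union bound over the $\binom{m_i}{2}$ pairs of packets, each pair colliding with probability exactly $1/w_i$, giving failure probability at most $m_i(m_i-1)/(2w_i) < m_i^2/w_i$. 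Your identification of ``every packet succeeds'' with ``no two packets share a slot'' is the right equivalence, and the union bound needs no independence, so there is no gap. Your argument is more elementary (no Taylor-type inequalities), yields a constant-factor-sharper bound of $m_i^2/(2w_i)$, and in fact does not use the hypothesis $w_i > 2m_i$ at all --- that hypothesis only serves to make the bound nontrivial, as you note. What the paper's approach buys is essentially nothing extra here for the stated one-sided bound; it is the natural route if one also wants matching lower bounds on the collision probability from the exact expression, but for this lemma your union bound is the cleaner derivation.
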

\begin{proof}
Consider placements of packets in the window that yield at most one packet per slot. Note that once a packet is placed in a slot, there is one less slot available for each remaining packet yet to be placed. Therefore, there are $w_i(w_i-1)\cdots(w_i-m_i+1)$ such placements.  

Since there are $w_i^{m_i}$ ways to place $m_i$ packets in $w_i$ slots, it follows that the probability that each of the $m_i$ packets chooses a different slot is:
\begin{eqnarray*}
\frac{w_i(w_i-1)\cdots (w_i-m_i+1)}{w_i^{m_i}}.
\end{eqnarray*}
\noindent We can lower bound this probability:
\begin{eqnarray*}
 &=& \frac{w_i^{m_i} (1-1/w_i) \cdots (1-(m_i-1)/w_i)}{w_i^{m_i}} \\ 
& \geq & e^{-\sum_{j=1}^{m_i-1} \frac{j}{w_i-j} } \mbox{~~~by LHS of Fact~\ref{fact:taylor}}\\
&\geq & e^{-\sum_{j=1}^{m_i-1} \frac{2j}{w_i} } \mbox{~~~since $w_i > 2m_i > 2j$ which}\\
&&\mbox{\hspace{57pt} leads to $\frac{j}{w_i-j} < \frac{2j}{w_i}$}\\
& = &e^{-(1/w_i)(m_i-1)m_i} \mbox{~~~by  sum of natural numbers}\\
& \geq & 1 - \frac{m_i^2}{w_i} + \frac{m_i}{w_i} \mbox{~~~by RHS of Fact~\ref{fact:taylor}}\\
& > & 1 - \frac{m_i^2}{w_i}
\end{eqnarray*}
as claimed.
\end{proof}


\begin{lemma}\label{lem:end} 
Assume a batch of $m_i< n^{7/10}$ packets that execute over a window of size $w_i$, where $w_i \geq n + \sqrt{n}$ for all $i$. Then, with probability at least $1-O(1/n)$, any monotonic backoff algorithm requires at most $6$ additional windows for all remaining packets to succeed.
\end{lemma}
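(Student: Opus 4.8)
The plan is to prove Lemma~\ref{lem:end} by spending one window on Case~2 of Corollary~\ref{cor:remaining} and five windows on repeated applications of Lemma~\ref{lem:last-window}, for a total of six. First I would handle the initial reduction. If $m_i \geq n^{2/5}$, then since also $w_i \geq n + \sqrt{n}$, Case~2 of Corollary~\ref{cor:remaining} applies and, with probability at least $1 - 1/n^2$, leaves $m_{i+1} = O(n^{2/5})$ packets after window $i$; if instead $m_i < n^{2/5}$, this window is unnecessary. In either case, after at most one window there are at most $c\, n^{2/5}$ packets left for some absolute constant $c \geq 1$, and every later window has size $w_j \geq n + \sqrt{n} > 2c\, n^{2/5}$ once $n$ is large enough.

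I would then run the clean-up over the next five windows, invoking Lemma~\ref{lem:last-window} once per window. Let $N_t$ be the number of packets present at the start of the $t$-th clean-up window, $t = 1, \dots, 5$. Since the backoff algorithm is monotonic and a packet leaves the moment it succeeds, the sequence $N_1 \geq N_2 \geq \dots \geq N_5$ is non-increasing and bounded by $c\, n^{2/5}$, so the hypothesis $w_j > 2N_t$ of Lemma~\ref{lem:last-window} holds throughout. Conditioned on the entire history up to the start of the $t$-th clean-up window, Lemma~\ref{lem:last-window} guarantees that all packets then present succeed during that window with probability at least $1 - N_t^2/w_j \geq 1 - c^2 n^{4/5}/(n + \sqrt{n}) \geq 1 - c^2 n^{-1/5}$ (the statement being vacuous if no packet is present). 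Hence, uniformly over the past, the conditional probability that a packet still remains after the window is at most $c^2 n^{-1/5}$.

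The one step that requires care --- and that I expect to be the main obstacle --- is that the events ``all present packets succeed in window $j$'' are dependent across the five windows, so the per-window bounds cannot simply be multiplied. The remedy is to chain the conditional bounds with the tower property: if $R$ denotes the event that at least one packet survives all five clean-up windows, then $\Pr[R] \leq c^2 n^{-1/5}\, \Pr[N_5 \geq 1]$, and iterating this inequality down through $N_4, \dots, N_1$ gives $\Pr[R] \leq (c^2 n^{-1/5})^5 = c^{10}/n$. Combining with the $1/n^2$ failure probability from the Corollary step, the total failure probability is $O(1/n)$ and the number of windows used is $1 + 5 = 6$ (and only $5 \leq 6$ in the case $m_i < n^{2/5}$), which is the claim.

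Finally, I would note a cruder but self-contained alternative that avoids both Corollary~\ref{cor:remaining} and Lemma~\ref{lem:last-window}: since $m_j \leq m_i < n^{7/10}$ in every window $j$ by monotonicity, a still-present packet collides in window $j$ with probability at most $(m_j - 1)/w_j < n^{7/10}/n = n^{-3/10}$, so by the tower property it survives all six windows with probability at most $(n^{-3/10})^6 = n^{-9/5}$; a union bound over the fewer than $n^{7/10}$ packets present initially then bounds the probability that any packet survives six windows by $n^{7/10} \cdot n^{-9/5} = n^{-11/10} = O(1/n)$. I would present the first argument as the main proof, since it follows the three-stage template described earlier in the section, and keep the second as a remark.
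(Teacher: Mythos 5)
Your main argument is correct and is essentially the paper's proof: one application of Case~2 of Corollary~\ref{cor:remaining} to get down to $O(n^{2/5})$ packets, followed by five applications of Lemma~\ref{lem:last-window} with per-window failure probability $O(n^{-1/5})$, compounding to $O(1/n)$; your explicit chaining of the conditional bounds via the tower property just makes rigorous the multiplication that the paper performs implicitly. The self-contained remark at the end (per-packet collision probability $<n^{-3/10}$ per window, raised to the sixth power and union-bounded over packets) is a valid and arguably cleaner alternative, but the proof you designate as primary matches the paper's route.
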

\begin{proof}
If $m_i \geq n^{2/5}$, then Case 2 of Corollary~\ref{cor:remaining} implies $m_{i+1}=O(n^{2/5})$; else, we do not need to invoke this case. By Lemma~\ref{lem:last-window}, the probability that any packets remain by the end of window $i+1$ is $O(n^{4/5}/n) = O(1/n^{1/5})$; refer to this as the probability of failure. Subsequent windows increase in size monotonically, while the number of remaining packets  decreases monotonically. Therefore, the probability of failure is  $O(1/n^{1/5})$ in any subsequent window, and the probability of failing over all of the next $5$ windows is less than $O(1/n)$. It follows that at most $6$ windows are needed for all packets to succeed.
\end{proof}


\section{Inductive Arguments}\label{sec:induction}

We present two inductive arguments for establishing upper bounds on $m_{i}$. Later in Section~\ref{sec:reanalyze}, these results are leveraged in our makespan analysis, and extracting them here allows us to modularize our presentation. Lemma~\ref{lem:induction-new} applies to \fb, \beb, and \llb, while Lemma~\ref{lem:induction-new-stb} applies to \stb. We highlight that a single inductive argument would suffice for all algorithms --- allowing for a simpler presentation --- if we only cared about asymptotic makespan. However, in the case of \fb we wish to obtain a tight bound on the first-order term, which is one of the contributions in~\cite{BenderFaHe05}. 

In the following lemmas, it is worth noting that we specify $m_0 \leq n$ instead of $m_0 = n$,  since some packets may have succeeded prior to window $0$; recall, this is the window where a large number of packets are expected to succeed. That said, the following inductive arguments apply so long as $m_0 \geq n^{7/10}$.

We also highlight the error probability present in each of the following lemmas; that is, the decrease in the number of remaining packets specified in these lemmas fails to hold with some probability. However, this probability is polynomially small in $n$, and this is what allows us to bound the total error over the $O(\log\log n)$ applications of these lemmas in  Section~\ref{sec:final-analysis}.

\begin{lemma}\label{lem:induction-new}
Consider a batch of $m_0\leq n$ packets that execute over windows $w_i \geq m_0 + \sqrt{m_0}$  for all $i\geq 0$. If $m_i \geq n^{7/10}$, then $m_{i+1} \leq (4/5)\frac{m_0}{2^{2^i\lg(5/4)}}$ with error probability at most $(i+1)/n^2$.
\end{lemma}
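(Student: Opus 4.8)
The plan is to induct on $i$, applying Lemma~\ref{lem:concentration} (the lower bound on the number of singletons, i.e.\ the number of packets that succeed in a window) exactly once per window $0,1,\dots,i$ and taking a union bound for the failure probability $(i+1)/n^2$. Throughout I will use that the $m_j$ are non-increasing, that $w_i \geq m_0+\sqrt{m_0} \geq m_i + \sqrt{m_i}$ (so Lemma~\ref{lem:concentration} is applicable with $N=m_i$, $B=w_i$), and that $w_i - 1 \geq m_0$, hence $m_i/(w_i-1) \leq m_i/m_0 \leq 1$. The computation is essentially that of Corollary~\ref{cor:remaining}, Case~1, with $n$ replaced by $m_0$ in the denominator; the one wrinkle is that the resulting ``quadratic'' bound $m_{i+1} < \tfrac54 m_i^2/m_0$ is vacuous in the first window (it only gives $m_1 < \tfrac54 m_0$), so the base case needs a sharper estimate.

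\emph{Base case $i=0$.} Apply the first bullet of Lemma~\ref{lem:concentration} with $N=m_0$, $B=w_0$, and $\epsilon_0 := \sqrt{4e\ln n}\,/\,n^{1/3}$. Since $m_0/(w_0-1)\leq 1$ we have $e^{m_0/(w_0-1)} \leq e$, so the number of successes in window $0$ is at least $(1-\epsilon_0)m_0/e$ with failure probability at most $e^{-\epsilon_0^2 m_0/(2e)} \leq e^{-2n^{1/30}\ln n} \leq 1/n^2$, where the first inequality uses $m_0 \geq n^{7/10}$. On that event
$$ m_1 \;\leq\; m_0\Bigl(1-\tfrac{1-\epsilon_0}{e}\Bigr) \;\leq\; m_0\Bigl(1-\tfrac1e+\epsilon_0\Bigr) \;\leq\; \tfrac{16}{25}\,m_0 $$
for $n$ sufficiently large, because $1-1/e < 16/25$ and $\epsilon_0 \to 0$. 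Since $2^{2^0\lg(5/4)}=5/4$ and $\tfrac{16}{25}=(4/5)/(5/4)$, this is exactly the claimed bound, with error $\leq 1/n^2 = (0+1)/n^2$.

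\emph{Inductive step.} Suppose the statement holds for $i-1$: with probability $\geq 1 - i/n^2$, $m_i \leq (4/5)\,m_0\,/\,2^{2^{i-1}\lg(5/4)} = (4/5)^{2^{i-1}+1}m_0$ (using $2^{c\lg(5/4)}=(5/4)^c$). Assuming $m_i \geq n^{7/10}$, monotonicity gives $m_j \geq n^{7/10}$ for all $j\leq i$, so every earlier step was legitimate. Applying Lemma~\ref{lem:concentration} to window $i$ with $\epsilon_i := \sqrt{4e\ln n}\,/\,n^{1/3}$, with probability $\geq 1 - 1/n^2$,
$$ m_{i+1} \;\leq\; m_i\bigl(1 - e^{-m_i/(w_i-1)} + \epsilon_i\bigr) \;\leq\; m_i\Bigl(\tfrac{m_i}{w_i-1}+\epsilon_i\Bigr) \;\leq\; \tfrac{m_i^2}{m_0} + m_i\epsilon_i, $$
where the middle step uses the RHS of Fact~\ref{fact:taylor} ($1-e^{-x}\leq x$) and the last uses $w_i-1\geq m_0$. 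Since $m_i\geq n^{7/10}$ and $m_0\leq n$, one checks $\epsilon_i \leq m_i/(4m_0)$ for $n$ large (this reduces to $4\sqrt{4e\ln n}\leq n^{1/30}$), so $m_i\epsilon_i \leq m_i^2/(4m_0)$ and hence $m_{i+1} < \tfrac54\,m_i^2/m_0$. Substituting the inductive bound,
$$ m_{i+1} \;<\; \tfrac54\cdot\frac{\bigl((4/5)^{2^{i-1}+1}m_0\bigr)^2}{m_0} \;=\; \tfrac54\,(4/5)^{2^i+2}m_0 \;=\; (4/5)^{2^i+1}m_0 \;=\; (4/5)\,\frac{m_0}{2^{2^i\lg(5/4)}}, $$
and a union bound over the (at most) $i/n^2$ from the hypothesis and the new $1/n^2$ gives overall error $\leq (i+1)/n^2$.

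\emph{Main obstacle.} The only genuinely non-mechanical point is the base case: one must notice that $1-e^{-x}\leq 1-1/e \approx 0.632 < 16/25$ for $x\in(0,1]$, which is precisely what forces the very first window to shrink $m_0$ by a fixed constant factor and thereby ``seed'' the subsequent double-exponential decay — invoking $m_{i+1}<\tfrac54 m_i^2/m_0$ alone does nothing in window~$0$. The remainder is bookkeeping: choosing $\epsilon_i = \Theta(\sqrt{\ln n}\,n^{-1/3})$ small enough that (i) the $5/4$ slack survives every step and (ii) each window fails with probability at most $1/n^2$, both of which follow from $m_i\geq n^{7/10}$ together with $m_0\leq n$.
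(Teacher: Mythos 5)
Your proof is correct and follows essentially the same route as the paper's: a direct application of Lemma~\ref{lem:concentration} in window $0$ to get the constant-factor shrink $m_1\leq(16/25)m_0$ (seeding the double-exponential decay), followed by induction via the quadratic recurrence $m_{i+1}<\tfrac54 m_i^2/(\cdot)$ with a union bound of $1/n^2$ per window. The one small divergence is that you re-derive the recurrence inline with $m_0$ in the denominator (using $w_i-1\geq m_0$) rather than citing Case~1 of Corollary~\ref{cor:remaining} verbatim, which has $n$ in the denominator and is stated under $w_i\geq n+\sqrt{n}$; since the lemma only assumes $w_i\geq m_0+\sqrt{m_0}$, your version is in fact the cleaner fit to the stated hypotheses.
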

\begin{proof}
We argue by induction on $i\geq 0$.
\medskip

\noindent{\bf{Base Case}.} Let $i=0$.  Using Lemma~\ref{lem:concentration}: 
\begin{eqnarray*}
m_1 & \leq & m_0 - \frac{(1-\epsilon)m_0}{e^{m_0/(w_0-1)}}\nonumber \\
& \leq & m_0 \left( 1 - \frac{1}{e^{m_0/(w_0-1)}} + \epsilon \right) \\
  & \leq & m_0 \left( 1 - \frac{1}{e} + \epsilon \right) \\
  & \leq & (16/25)m_0
\end{eqnarray*}
where the last line follows by setting $\epsilon = \frac{\sqrt{4e\ln n}}{n^{1/3}}$, and assuming $n$ is sufficiently large; this gives an error probability of at most $1/n^2$ . The base case is satisfied since $(4/5)\frac{m_0}{2^{2^i \lg(5/4)}} = (4/5)\frac{m_0}{2^{\lg(5/4)}} = (16/25)m_0$.\medskip

\noindent{\bf{Induction Hypothesis (IH)}.} For  $i\geq 1$,  assume $m_{i} \leq (4/5)\frac{m_0}{2^{2^{i-1}\lg(5/4)}}$ with error probability at most $i/n^2$.
\medskip

\noindent{\bf{Induction Step}.} For window $i\geq 1$, we wish to show that $m_{i+1} \leq (4/5)\frac{m_0}{2^{2^i\lg(5/4)}}$ with an error bound of $(i+1)/n^2$. Addressing the number of packets, we have:
\begin{eqnarray*}
m_{i+1}\hspace{-5pt}&\leq&\hspace{-5pt} \frac{(5/4)m_i^2}{w_i} \nonumber \\
\hspace{-5pt}&\leq&\hspace{-5pt}   \left(\frac{4\,m_0}{5\cdot 2^{2^{i-1}\lg(5/4)}}\right)^2 \left(\frac{5}{4w_i} \right)  \\
\hspace{-5pt}&\leq&\hspace{-5pt}   \left(\frac{4m_0}{5\cdot 2^{2^{i}\lg(5/4)}}\right)\left(\frac{m_0}{w_i} \right)  \\
\hspace{-5pt}& < & \hspace{-5pt}  \left(\frac{4m_0}{5\cdot 2^{2^{i}\lg(5/4)}}\right) \mbox{since $w_i>m_0$} 
 \end{eqnarray*}
\noindent The first line follows from Case 1 of Corollary~\ref{cor:remaining}, which we may invoke since $w_i \geq m_0 + \sqrt{m_0}$  for all $i\geq 0$, and $m_i \geq n^{7/10}$ by assumption. This yields an error of at most $1/n^2$, and so the total error is at most $i/n^2 + 1/n^2  = (i+1)/n^2$ as desired. The second line follows from the IH.
\end{proof}


\noindent A nearly identical lemma is useful for upper-bounding the makespan of \stb. The main difference arises from addressing the decreasing window sizes in a run, and this necessitates the condition that $w_i \geq m_i + \sqrt{m_i}$ rather than $w_i \geq m_0 + \sqrt{m_0}$ for all $i\geq 0$.  Later in Section~\ref{sec:reanalyze}, we start analyzing \stb when the window size reaches $4n$; this motivates the condition that $w_i \geq 4n/2^i$ our next lemma.

\begin{lemma}\label{lem:induction-new-stb}
Consider a batch of $m_0\leq n$ packets that execute over windows of size $w_i \geq m_i + \sqrt{m_i}$ and $w_i \geq 4n/2^i$ for all $i\geq 0$. If $m_i \geq n^{7/10}$, then $m_{i+1} \leq (4/5)\frac{m_0}{2^i 2^{2^i\lg(5/4)}}$ with error probability at most $(i+1)/n^2$.
\end{lemma}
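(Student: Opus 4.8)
The plan is to follow the proof of Lemma~\ref{lem:induction-new} almost verbatim, adjusting only for the two differences in the hypotheses: windows may now shrink within a run, so we may assume $w_i \ge m_i + \sqrt{m_i}$ but not $w_i \ge m_0 + \sqrt{m_0}$; and the operative size bound is $w_i \ge 4n/2^i$ rather than $w_i > n$. Both hypotheses will be used, for different purposes. The bound $w_i \ge m_i + \sqrt{m_i}$ is exactly what is needed to apply the first bullet of Lemma~\ref{lem:concentration} in window $i$ (with $B = w_i$, $N = m_i$, it supplies the case $B \ge N + \sqrt{N}$); we route through Lemma~\ref{lem:concentration} directly rather than through Case~1 of Corollary~\ref{cor:remaining}, since the latter requires $w_i \ge n + \sqrt{n}$, which can fail once the window has shrunk. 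The bound $w_i \ge 4n/2^i$ is what drives the recursion. As in Lemma~\ref{lem:induction-new} we argue by induction on $i \ge 0$, using throughout that the number of unfinished packets is non-increasing, so $m_i \ge n^{7/10}$ forces $m_j \ge n^{7/10}$ for all $j \le i$.

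For the base case $i=0$ the computation is identical to that in Lemma~\ref{lem:induction-new}: applying the first bullet of Lemma~\ref{lem:concentration} with $\epsilon = \sqrt{4e\ln n}/n^{1/3}$ and using $w_0 \ge m_0 + \sqrt{m_0}$ (so $m_0/(w_0-1) < 1$, hence $e^{m_0/(w_0-1)} < e$, which keeps the failure probability below $1/n^2$ because $m_0 \ge n^{7/10}$) gives
\begin{align*}
m_1 \;\le\; m_0\Bigl(1 - e^{-m_0/(w_0-1)} + \epsilon\Bigr) \;\le\; m_0\bigl(1 - e^{-1} + \epsilon\bigr) \;\le\; 0.64\,m_0,
\end{align*}
which is exactly $(4/5)\,m_0 / \bigl(2^{0}\,2^{2^{0}\lg(5/4)}\bigr)$. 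For the inductive step, suppose the claim holds at index $i-1$, so $m_i \le (4/5)\,m_0 / \bigl(2^{i-1}(5/4)^{2^{i-1}}\bigr)$ with error at most $i/n^2$. Apply the first bullet of Lemma~\ref{lem:concentration} again with the same $\epsilon$ (valid since $w_i \ge m_i + \sqrt{m_i}$, which forces $m_i/(w_i-1)<1$, so $e^{m_i/(w_i-1)} < e$ and the new failure probability is at most $1/n^2$), and bound, via the RHS of Fact~\ref{fact:taylor},
\begin{align*}
m_{i+1} \;\le\; m_i\!\left(\frac{m_i}{w_i-1} + \epsilon\right) \;\le\; \frac{m_i^2}{w_i-1} + m_i\epsilon \;\le\; \frac{5}{16}\cdot\frac{2^i m_i^2}{n}.
\end{align*}
The last inequality uses $w_i - 1 \ge 4n/2^i - 1 = (4n/2^i)(1-o(1))$, valid because $2^i = O(\lg n)$ throughout the regime $m_i \ge n^{7/10}$ (which forces $i = O(\lg\lg n)$), so that $m_i^2/(w_i-1) \le (1+o(1))\,2^i m_i^2/(4n)$; it then absorbs both that $(1+o(1))$ factor and the term $m_i\epsilon$, which is $o\bigl(2^i m_i^2/n\bigr)$ since $n\epsilon/m_i = O\bigl(\sqrt{\ln n}\,n^{-1/30}\bigr) \to 0$ when $m_i \ge n^{7/10}$ --- and this is precisely why the constant degrades from $1/4$ to $5/16$. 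Substituting the induction hypothesis and simplifying with $2^i/4^{i-1} = 4/2^i$ and $(5/4)^{2^{i-1}}\cdot(5/4)^{2^{i-1}} = (5/4)^{2^i}$ gives
\begin{align*}
m_{i+1} \;\le\; \frac{4m_0}{5}\cdot\frac{m_0}{n}\cdot\frac{1}{2^i(5/4)^{2^i}} \;\le\; \frac{4}{5}\cdot\frac{m_0}{2^i\,2^{2^i\lg(5/4)}},
\end{align*}
the last step because $m_0 \le n$. A union bound over the $i+1$ applications of Lemma~\ref{lem:concentration} gives total error at most $(i+1)/n^2$, completing the induction.

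The step I expect to require the most care is checking that the constants in the recursion close \emph{exactly}. The shrinking windows inject a factor $2^i$ into the per-window decay $m_{i+1} \lesssim 2^i m_i^2/n$, and one must verify that this still telescopes against the doubly-exponential denominator $2^{2^i\lg(5/4)} = (5/4)^{2^i}$; in particular, the leading constant ``$4$'' in the hypothesis $w_i \ge 4n/2^i$ is exactly what is needed to cancel the surplus $m_0/n \le 1$ that appears after substituting the induction hypothesis, and any smaller constant would break the step. The two side conditions also deserve a careful look: one should confirm that $e^{m_i/(w_i-1)} < e$ holds in \emph{every} window (which is why the hypothesis $w_i \ge m_i + \sqrt{m_i}$, rather than merely $w_i \ge 4n/2^i$, is retained --- without it the error exponent in Lemma~\ref{lem:concentration} could be polynomially weakened), and that the various $o(1)$ estimates are uniform over the relevant range $i = O(\lg\lg n)$. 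None of this is deep, but the bookkeeping is the crux of the argument.
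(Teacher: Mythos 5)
Your proof is correct and follows the same overall plan as the paper's: the same induction, the same base case computation giving $m_1 \le 0.64\,m_0$, the same induction hypothesis, and the same algebraic telescoping in which the hypothesis $w_i \ge 4n/2^i$ cancels the surplus factor $m_0/n$. The one place you genuinely deviate is the justification of the per-window recursion $m_{i+1} \lesssim 2^i m_i^2/n$: the paper obtains its first line $m_{i+1} \le (5/4)m_i^2/w_i$ by citing Case~1 of Corollary~\ref{cor:remaining}, whose stated precondition is $w_i \ge n+\sqrt{n}$ --- a condition that can fail once the windows of a run have shrunk below $n$ --- and justifies the invocation with the condition $w_i \ge m_0+\sqrt{m_0}$, which is not what the lemma assumes. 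You instead go back to Lemma~\ref{lem:concentration} directly, observing that $w_i \ge m_i + \sqrt{m_i}$ is exactly the hypothesis $B \ge N+\sqrt{N}$ needed there, and you verify explicitly that the error exponent still beats $2\ln n$ because $e^{m_i/(w_i-1)} < e$ and $m_i \ge n^{7/10}$. This is more careful than the paper's own argument and closes a small gap in it; your side remarks (that $m_i \ge n^{7/10}$ forces $2^i = O(\lg n)$ so that the $-1$ in $w_i - 1$ and the $m_i\epsilon$ term are absorbed into the constant $5/16$, and that the constant $4$ in $w_i \ge 4n/2^i$ is exactly the threshold at which the induction closes when $m_0 = n$) all check out.
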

\begin{proof}
We argue  by induction on $i\geq 0$.
\medskip

\noindent{\bf{Base Case}.} Nearly identical to the base case in proof of Lemma~\ref{lem:induction-new}; note the bound on $m_{i+1}$ is identical for $i=0$.\medskip

\noindent{\bf{Induction Hypothesis (IH)}.} For  $i\geq 1$,  assume $m_{i} \leq (4/5)\frac{m_0}{2^{i-1} 2^{2^{i-1}\lg(5/4)}}$ with error probability at most $i/n^2$.
\medskip

\noindent{\bf{Induction Step}.} For window $i\geq 1$, we wish to show that $m_{i+1} \leq (4/5)\frac{m_0}{2^i2^{2^i\lg(5/4)}}$ with an error bound of $(i+1)/n^2$ (we use the same $\epsilon$ as in Lemma~\ref{lem:induction-new}). Addressing the number of packets, we have:
\begin{eqnarray*}
m_{i+1}&\leq& \frac{(5/4)m_i^2}{w_i} \\ 
&\leq&\left(\frac{4 m_0}{5\cdot 2^{i-1} 2^{2^{i-1}\lg(5/4)}}\right)^2\left(\frac{5}{4w_i} \right)\\
&\leq&   \left(\frac{4m_0}{5\cdot 2^i 2^{2^{i}\lg(5/4)}}\right)\left(\frac{m_0}{2^{i-2}w_i} \right) \\
& \leq &   \left(\frac{4  m_0}{5\cdot 2^i 2^{2^{i}\lg(5/4)}}\right) \mbox{since $w_i \geq 4n/2^i$}
 \end{eqnarray*}
\noindent Again, the first line follows from Case 1 of Corollary~\ref{cor:remaining}, which we may invoke since $w_i \geq m_0 + \sqrt{m_0}$  for all $i\geq 0$, and $m_i \geq n^{7/10}$ by assumption. This gives the desired error bound of $i/n^2 + 1/n^2  = (i+1)/n^2$. The second line follows from the IH.
\end{proof}

\section{Bounding Makespan}\label{sec:reanalyze}

In this section, we pull together the results established earlier in order to bound the makespan of several backoff algorithms.

\subsection{Describing the Backoff Algorithms}\label{sec:describe-backoff}
We begin by describing the windowed backoff algorithms \FB (\fb), \BEB (\beb), and \LLB (\llb) analyzed in~\cite{BenderFaHe05}. Recall that, in each window, a packet selects a single slot uniformly at random to send in. Therefore, we need only specify how the size of successive windows change.

\fb is the simplest, with  all windows having size $\Theta(n)$. The value of hidden constant does not appear to be explicitly specified in the literature, but we observe that Bender~et al.~\cite{BenderFaHe05} use $3e^3$ in their upper-bound analysis. Here, we succeed using a smaller constant; namely, any value at least $1+1/\sqrt{n}$.

We let \beb use an initial window size of $1$, although any constant will yield the same asymptotic behavior. Under \beb, each successive window doubles in size.  

We let \llb use an initial window size of $4$; again, any constant will yield the same asymptotic behavior. For a current window size of $w_i$, it executes $\lceil \lg\lg(w_i) \rceil$ windows of that size before doubling; we call such a sequence of same-sized windows a \defn{plateau}.\footnote{As stated by Bender et al.~\cite{BenderFaHe05}, an equivalent specification (in terms of asymptotic bounds on makespan) of \llb is that $w_{i+1} = (1 + 1/\lg\lg(w_i))w_i$. This equivalence is elaborated on by Anderton et al.~\cite{anderton2021windowed}. } 

STB is non-monotonic and executes over a doubly-nested loop. The outer loop sets the current window size $w$ to be double that used in the preceding outer loop and each packet selects a single slot to send in; this is like BEB. Additionally, for each such $w$, the inner loop executes over $\lg w$ windows of decreasing size: $w, w/2, w/4, ..., 1$;  this sequence of windows is referred to as a \defn{run}. For each window in a run, a packet chooses a slot uniformly at random in which to send.


\subsection{Analysis}\label{sec:final-analysis}
The following results employ tools from the prior sections. In the following arguments, it is easy to check that the number of times these tools are invoked yields a total error probability which is at most $O(1/n)$. Thus, our theorems on makespan below hold with probability at least $1-O(1/n)$, and we omit further discussion of error.


\begin{theorem}\label{thm:FB}
The makespan of \fb with window size at least $n+\sqrt{n}$ is at most $n\lg\lg n +  O(n)$. 
\end{theorem}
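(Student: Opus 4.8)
The plan is to fix the window size at $w=n+\sqrt{n}$ (the hypothesis permits this, and it is exactly the smallest value for which the Chernoff machinery of Section~\ref{sec:prelim} applies), to observe that \fb is monotonic with window $0$ equal to the very first window, so that $m_0=n$, and then to bound the number of windows from above and below, finally multiplying by $w$. Since every tool invoked carries failure probability $O(\lg n/n^2)$ or $O(1/n)$ and is used $O(\lg\lg n)$ times, the union bound gives the claimed $1-O(1/n)$ guarantee, so I suppress error bookkeeping.

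\textbf{Upper bound.} First I would run the inductive estimate of Lemma~\ref{lem:induction-new}: since $w_i=n+\sqrt{n}=m_0+\sqrt{m_0}$ for every $i$, its hypotheses hold, so as long as $m_i\geq n^{7/10}$ we have $m_{i+1}\leq (4/5)\,m_0/2^{2^i\lg(5/4)}$. Let $I$ be the first index with $m_I<n^{7/10}$; because $(m_i)$ is non-increasing, the lemma applies at every window before $I$, and combining $m_{I-1}\geq n^{7/10}$ with the bound at window $I-2$ yields $2^{2^{I-2}\lg(5/4)}\leq (4/5)\,n^{3/10}$, hence $I\leq \lg\lg n+O(1)$. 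Now I apply Lemma~\ref{lem:end} at window $I$ (valid since $m_I<n^{7/10}$, $w_I\geq n+\sqrt{n}$, and \fb is monotonic): at most $6$ further windows clear all packets. So the total number of windows is $\lg\lg n+O(1)$, each of $n+\sqrt{n}$ slots, giving makespan at most $(n+\sqrt{n})(\lg\lg n+O(1))=n\lg\lg n+O(n)$, where I use $\sqrt{n}\,\lg\lg n=o(n)$.

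\textbf{Lower bound.} Here I would build a downward-facing twin of Lemma~\ref{lem:induction-new} from the second bullet of Lemma~\ref{lem:concentration}, which caps the number of successes in window $i$ by $(1+\epsilon)m_i/e^{(m_i-1)/w_i}$; hence $m_{i+1}\geq m_i\bigl(1-(1+\epsilon)/e^{(m_i-1)/w_i}\bigr)$. Using $e^{(m_i-1)/w_i}\geq 1+(m_i-1)/w_i$ and simplifying gives $m_{i+1}\geq m_i\cdot\frac{m_i-1-\epsilon w_i}{w_i+m_i-1}$. With the choice $\epsilon=\Theta(\sqrt{(\ln n)/m_i})$ the per-window failure probability is $O(1/n^2)$, and throughout the regime $m_i\geq n^{7/10}$ one has $\epsilon w_i=O(n^{0.65}\sqrt{\ln n})=o(m_i)$ while $w_i+m_i-1\leq 3n$, so $m_{i+1}\geq m_i^2/(6n)$ there. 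Starting from $m_0=n$ this squaring recursion gives $m_i\geq 6n/6^{2^i}$, which still exceeds $n^{7/10}$ for every $i\leq \lg\lg n-O(1)$. Thus at least $\lg\lg n-O(1)$ windows are begun, so the makespan is at least $(\lg\lg n-O(1))(n+\sqrt{n})\geq n\lg\lg n-O(n)$.

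\textbf{Where the work is.} The routine part is the upper bound: it is essentially bookkeeping on top of Lemmas~\ref{lem:induction-new} and~\ref{lem:end}. The delicate point is the lower bound, where a singleton \emph{upper} bound must be turned into a recursion that is still morally ``$m_{i+1}\gtrsim m_i^2/n$'': one has to verify that the additive Chernoff slack $\epsilon w_i$ never swamps $m_i$ before $m_i$ falls below the working threshold $n^{7/10}$, and this is precisely what dictates both that threshold and the scaling $\epsilon=\Theta(\sqrt{(\ln n)/m_i})$. Because the recursion squares at each step, the accumulated constant factors (the $6$ here, the $5/4$ upstream) shift the window count by only $O(1)$, so they disappear into the $O(n)$ term and leave the first-order coefficient equal to $1$; the only arithmetic needed is checking these $O(1)$ shifts and the absorption $\sqrt{n}\,\lg\lg n=o(n)$.
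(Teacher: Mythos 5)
Your upper-bound argument is correct and follows the paper's own route exactly: Lemma~\ref{lem:induction-new} drives the packet count below $n^{7/10}$ within $\lg\lg n+O(1)$ windows, Lemma~\ref{lem:end} finishes the stragglers in $6$ more, and the slot count is $(\lg\lg n+O(1))(n+\sqrt{n})=n\lg\lg n+O(n)$. Where you genuinely diverge is the lower bound: the paper's proof of this theorem never establishes it --- the argument given (in both the body and the appendix) stops at the upper bound, and the Discussion section explicitly lists the $n\lg\lg n-O(n)$ lower bound for \fb as an aim for future work. Your proposal fills that gap by running the concentration machinery in the other direction: the \emph{upper} bound on singletons from the second bullet of Lemma~\ref{lem:concentration} yields $m_{i+1}\geq m_i\,(m_i-1-\epsilon w_i)/(w_i+m_i-1)\geq m_i^2/(6n)$ in the regime $m_i\geq n^{7/10}$, and the squaring recursion keeps $m_i\geq 6n\cdot 6^{-2^i}>0$ for all $i\leq\lg\lg n-O(1)$, forcing that many full windows. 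The key checks --- that Lemma~\ref{lem:concentration} applies because $w_i=n+\sqrt{n}\geq m_i+\sqrt{m_i}$, that the slack $\epsilon w_i=O(n^{0.65}\sqrt{\ln n})$ stays $o(m_i)$ above the $n^{7/10}$ threshold with $\epsilon=\Theta(\sqrt{(\ln n)/m_i})$, and that the induction maintains $m_i\geq n^{7/10}$ before each application of the recursion --- all go through, and the constant $6$ only shifts the window count by $O(1)$, which is absorbed into the $O(n)$ term. So your write-up actually proves strictly more than the paper's does: it delivers both halves of the stated theorem, whereas the paper's proof as written only supports the upper bound.
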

\begin{proof}
Since $w_i \geq n+\sqrt{n}$ for all $i\geq 0$, by Lemma~\ref{lem:induction-new}  less than $n^{7/10}$ packets remain after $\lg\lg(n) + 1$  windows;  to see this, solve for $i$ in $(4/5)\frac{n}{2^{2^i\lg(5/4)}} = n^{7/10}$. By Lemma~\ref{lem:end}, all remaining packets succeed within $6$ more windows. The corresponding number of slots is $(\lg\lg n + 7) (n + \sqrt{n}) = n\lg\lg n + O(n)$. 
\end{proof}

\begin{theorem}\label{thm:BEB-upper}
The makespan of \beb is at most $512n\lg n + O(n)$.
\end{theorem}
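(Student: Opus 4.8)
The plan is to treat \beb's doubling schedule as \fb carried out on a geometrically growing window: once \beb's window first exceeds $n+\sqrt n$, the inductive machinery of Lemma~\ref{lem:induction-new} applies essentially verbatim, and the makespan then follows by summing a geometric series of slot counts. Concretely, \beb begins with windows of size $1,2,4,\dots$; let $w_0$ be the size of the first such window with $w_0\ge n+\sqrt n$. Since windows double, $w_0<2(n+\sqrt n)$, and all windows strictly preceding it contribute only $w_0-1=O(n)$ slots, which I absorb into the additive $O(n)$ term. Re-index so that this window is window $0$; then $w_i=2^i w_0$ and $w_i\ge w_0\ge n+\sqrt n\ge m_0+\sqrt{m_0}$ for every $i\ge 0$, where $m_0\le n$, so the hypotheses of Lemma~\ref{lem:induction-new} are satisfied.

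By Lemma~\ref{lem:induction-new}, after $i$ windows we have $m_{i+1}\le (4/5)m_0/2^{2^i\lg(5/4)}$ provided $m_i\ge n^{7/10}$; solving $(4/5)n/2^{2^i\lg(5/4)}<n^{7/10}$ --- the same computation used for \fb in Theorem~\ref{thm:FB} and spelled out in the appendix --- shows it suffices to take $i^\star=\lceil\lg\lg n\rceil=\lg\lg n+O(1)$, with accumulated error at most $(i^\star+1)/n^2=O(\log\log n/n^2)$. At that point fewer than $n^{7/10}$ packets remain, and since every window has size at least $n+\sqrt n$, Lemma~\ref{lem:end} drives the batch to completion in at most $6$ further windows with error $O(1/n)$. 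Hence \beb uses at most $L:=i^\star+7\le \lg\lg n+O(1)$ windows counted from window $0$, and a union bound bounds the total failure probability by $O(1/n)$.

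It remains to sum the slots. From window $0$ onward the algorithm spends $\sum_{i=0}^{L-1}2^i w_0=(2^L-1)w_0<2^L w_0$ slots. For all large $n$, $L\le \lg\lg n+8$, so $2^L\le 256\cdot 2^{\lg\lg n}=256\lg n$; combined with $w_0<2(n+\sqrt n)$ this is at most $512(n+\sqrt n)\lg n=512n\lg n+O(n)$. Adding the $O(n)$ slots from the discarded initial windows gives a makespan of at most $512n\lg n+O(n)$, holding with probability $1-O(1/n)$.

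The only delicate point is the constant. One must check that the number of windows after window $0$ is genuinely $\lg\lg n+O(1)$, so that $2^L=O(\lg n)$, and that the geometric-series factor, the identity $2^{\lg\lg n}=\lg n$, and the bound $w_0<2(n+\sqrt n)$ combine to stay within $512n\lg n$. None of this is hard, but it is precisely where the constant $512$ originates, so it is the step I would write out with care; every other ingredient is an off-the-shelf invocation of Lemmas~\ref{lem:induction-new} and~\ref{lem:end}.
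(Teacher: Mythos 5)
Your proposal is correct and follows essentially the same route as the paper: wait for the first window of size at least $n+\sqrt n$ (which has size less than $2(n+\sqrt n)$), absorb the preceding slots into the $O(n)$ term, apply Lemma~\ref{lem:induction-new} for $\lg\lg n + O(1)$ windows and Lemma~\ref{lem:end} for $6$ more, then sum the geometric series to obtain $2^{\lg\lg n + 8}\cdot 2(n+\sqrt n) = 512(n+\sqrt n)\lg n$. Your accounting of the window count and the origin of the constant $512$ matches the paper's computation exactly.
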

\begin{proof}
Let $W$ be the first window of size at least $n+\sqrt{n}$ (and less than $2(n+\sqrt{n})$).  Assume no packets finish before the start of $W$; otherwise, this can only improve the makespan. By Lemma~\ref{lem:induction-new}  less than $n^{7/10}$ packets remain after $ \lg\lg(n) + 1$  windows. By Lemma~\ref{lem:end} all remaining packets succeed within $6$ more windows. Since $W$ has size less than  $2(n+\sqrt{n})$, the number of slots until the end of $W$, plus those for the $\lg\lg(n)+7$ subsequent windows, is less than:
\begin{eqnarray*}
 &&\left(\sum_{j=0}^{\lg(2(n+\sqrt{n}))}2^j\right) + \left(\sum_{k=1}^{\lg\lg(n) + 7} 2(n+\sqrt{n})2^k \right)\\
  &=& 512 (n+\sqrt{n})\lg n + O(n) 
\end{eqnarray*}
\noindent  by the sum of a geometric series.
\end{proof}

\begin{theorem}\label{thm:STB}
The makespan of \stb is $O(n)$.
\end{theorem}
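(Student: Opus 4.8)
The plan is to discard everything that happens before the outer loop first produces a window of size at least $4n$, and then to drive all packets to success within that single run. This discarding is free up to constants: a run with outer window size $W$ consumes $\sum_{k\ge0}\lfloor W/2^k\rfloor<2W$ slots, and summing $2W$ over the outer window sizes $W=1,2,4,\dots$ that are below $4n$ is $O(n)$ by a geometric series. Fix $R$ to be the run whose outer window size $W$ satisfies $4n\le W<8n$, index its windows so that $w_i=W/2^i\ge 4n/2^i$, and assume without loss of generality (it only helps) that no packet has finished before $R$ begins, so the initial count is $m_0\le n$.

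First I would run the inductive reduction of Lemma~\ref{lem:induction-new-stb} on $R$. Its hypotheses hold: $w_i\ge 4n/2^i$ by construction, and $w_i\ge m_i+\sqrt{m_i}$ is maintained inductively, since the bound the lemma supplies forces $m_i\le(8/5)n/2^i$, whence $m_i+\sqrt{m_i}<4n/2^i\le w_i$ for large $n$ throughout the regime $m_i\ge n^{7/10}$ that the lemma addresses (and for $i=0$ directly, since $W\ge4n$). Solving $(4/5)\,m_0\big/\big(2^i2^{2^i\lg(5/4)}\big)=n^{7/10}$, where the double-exponential factor dominates, shows that after $i^\star=\lg\lg n+O(1)$ windows of $R$ the count has fallen to $m_{i^\star}<n^{7/10}$, at accumulated error $O(\lg\lg n/n^2)$.

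Since $i^\star=\lg\lg n+O(1)$, the window at that point has size $w_{i^\star}=W/2^{i^\star}=\Theta(n/\lg n)$, which dwarfs $n^{7/10}$. Applying the lower-tail bound of Lemma~\ref{lem:concentration} (the case $B\ge N+\sqrt N$ of Theorem~\ref{thm:neg-assoc-result} applies) to windows $i^\star$ and $i^\star+1$ with $\epsilon=\sqrt{4e\ln n/m_i}$ (error $\le1/n^2$ each): the leftover count obeys $m_{i+1}\le m_i^2/(w_i-1)+\epsilon m_i$, where $\epsilon m_i=O(\sqrt{m_i\ln n})$ and $m_i^2/(w_i-1)=o(m_i)$ because $w_i=\Theta(n/\lg n)$, so a short computation gives $m_{i^\star+2}=o(\sqrt n)$, say $m_{i^\star+2}\le n^{0.3}$. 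Then windows $i^\star+3,i^\star+4,i^\star+5$ of $R$ still have size $\Theta(n/\lg n)>2n^{0.3}$, so Lemma~\ref{lem:last-window} applies to each and says it fails to clear all remaining packets with probability at most $m^2/w_i=O(n^{-0.4}\lg n)$; since the count only decreases, conditioning successively gives that all three fail with probability $O\big((n^{-0.4}\lg n)^3\big)=o(1/n)$. Hence, with probability $1-O(1/n)$, all packets have succeeded by the end of $R$, and the makespan is $O(n)$ (slots before $R$) $+\,2W=O(n)$, i.e.\ $O(n)$.

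The step I expect to be the main obstacle, and the point where \stb{} genuinely differs from \fb, \beb, and \llb, is the tail of run $R$: because the windows within a run \emph{shrink}, one cannot imitate Lemma~\ref{lem:end} and lean on ever-larger later windows to amplify the chance that the last handful of packets finish. The fix is to stop the quadratic/Chernoff reduction early --- while the window is still $\Theta(n/\lg n)$ and only a small polynomial number of packets remain --- and then burn a constant number of those still-large windows on Lemma~\ref{lem:last-window}, whose per-window failure probability $m^2/w$ is then polynomially small, so that a constant power of it is $o(1/n)$. One also has to confirm that $i^\star+O(1)=\lg\lg n+O(1)\ll\lg n$, so that all the windows invoked really lie inside $R$ and keep size $\Theta(n/\lg n)$; this is immediate. (A shorter but looser route would reduce to $O(n^{0.4})$ packets inside $R$ and then reuse the first $\Theta(1)$ windows of the \emph{next} run, each of size $\Theta(n)$, in the same way.)
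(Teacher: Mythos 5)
Your proof is correct and follows the paper's overall skeleton --- isolate the first run whose opening window has size at least $4n$, charge everything before it to a geometric series, drive the count down to $n^{7/10}$ via Lemma~\ref{lem:induction-new-stb} while verifying $w_i \geq m_i + \sqrt{m_i}$ by a joint induction --- but your endgame is genuinely different from the paper's, and arguably more careful. The paper, after reaching $m_i < n^{7/10}$, pessimistically abandons the current run and invokes Lemma~\ref{lem:end} on the first six windows of the \emph{next} run; but Lemma~\ref{lem:end} is stated for monotonic algorithms with $w_i \geq n+\sqrt{n}$ for all $i$, and inside an \stb{} run the windows shrink (the sixth window of a run opening at $8n$ already has size $n/4$), so that invocation does not literally match the lemma's hypotheses. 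You instead finish inside the run $R$ itself: you stop the quadratic reduction at $i^\star = \lg\lg n + O(1)$, where the window size is still $\Theta(n/\lg n)$, push the count down to $n^{0.3}$ with two more applications of Lemma~\ref{lem:concentration}, and then apply Lemma~\ref{lem:last-window} to three consecutive windows, cubing the per-window failure probability $O(n^{-0.4}\lg n)$ to get $o(1/n)$. This amplification trick is the right way to compensate for the fact that one cannot lean on ever-larger later windows within a run, and your check that $i^\star + O(1) \ll \lg n$ keeps all invoked windows inside $R$ at size $\Theta(n/\lg n)$ is exactly the point that needs verifying. Your verification of $w_i \geq m_i + \sqrt{m_i}$ via the crude consequence $m_i \leq (8/5)n/2^i$ of the lemma's own conclusion is also a legitimate simplification of the paper's separate auxiliary induction on $m_i \leq (5/4)^{2^{i-1}-1} n/3^{2^{i-1}}$. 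The only loose ends are presentational: you should note that if the count drops below the threshold needed for $\epsilon < 1$ in the intermediate Chernoff steps you simply skip ahead to the Lemma~\ref{lem:last-window} phase, and that the three failure events are controlled by conditioning on the (arbitrary) state at the start of each window, which is what makes the product bound valid.
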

\begin{proof}
Let $W$ be the first window of size at least $4n$.  Assume no packets finish before the start of $W$, that is $m_0=n$; else, this can only improve the makespan. 

While $m_i \geq n^{7/10}$, our analysis examines the windows in the run starting with window $W$, and so $w_0\geq 4n, w_1\geq 2n$, etc.  To invoke Lemma~\ref{lem:induction-new-stb}, we must ensure that the condition $w_i \geq m_i +\sqrt{m_i}$ holds in each window of this run.  This holds for $i=0$, since $w_0 = 4n \geq n+\sqrt{n}$. 

For $i\geq 1$,  we argue this inductively by proving $m_i \leq (5/4)^{2^{i-1}-1}\frac{n}{3^{2^{i-1}}}$. For the base case $i=1$, Lemma~\ref{lem:concentration} implies that $m_1 \leq n(1 - e^{-n/(4n-1)}  +\epsilon) \leq n(1 - e^{-1/3} + \epsilon) \leq n/3$, where $\epsilon$ is given in Lemma 6. For the inductive step, assume that $m_i \leq (5/4)^{2^{i-1}-1}\frac{n}{3^{2^{i-1}}}$ for all $i\geq 2$. Then, by Case 1 of Corollary~\ref{cor:remaining}:
\begin{eqnarray*}
m_{i+1} &\leq& (5/4)m_i^2/n \\
& \leq & (5/4) \left((5/4)^{2^{i-1}-1}\frac{n}{3^{2^{i-1}}}\right)^2/n\\
& \leq & (5/4)^{2^{i}-1}\frac{n}{3^{2^{i}}}
 \end{eqnarray*}
\noindent where the second line follows from the assumption, and so the inductive step holds.  On the other hand, at window $i$, $w_i\geq \frac{4n}{2^i} > \frac{4n}{(5/2)\cdot (12/5)^{2^{i-1}}} = 2\cdot (5/4)^{2^{i-1}-1}\frac{n}{3^{2^{i-1}}}  \geq 2 m_i > m_i + \sqrt{m_i}$ holds.

Lemma~\ref{lem:induction-new-stb} implies that after $ \lg\lg n + O(1)$  windows in this run, less than $n^{7/10}$ packets remain. Pessimistically, assume no other packets finish in the run. The next run starts with a window of size at least $8n$, and by
Lemma~\ref{lem:end}, all remaining packets succeed within the first $6$ windows of this run.

We have shown that \stb terminates within at most $\lceil \lg(n) \rceil + O(1)$ runs. The total number of slots over all of these runs is $O(n)$ by a geometric series.
\end{proof}

It is worth noting that \stb has  asymptotically-optimal makespan since we cannot hope to finish $n$ packets in $o(n)$ slots.

Bender et al.~\cite{BenderFaHe05} show that the optimal makespan for any {\it monotonic} windowed backoff algorithm is $O(n\lg\lg n/\lg\lg\lg n)$ and that \llb achieves this.  We re-derive the makespan for \llb.

\begin{figure}[t]
\hspace{0cm}\includegraphics[width=1.01\textwidth, trim = 0in 0in 0in 0in, clip]{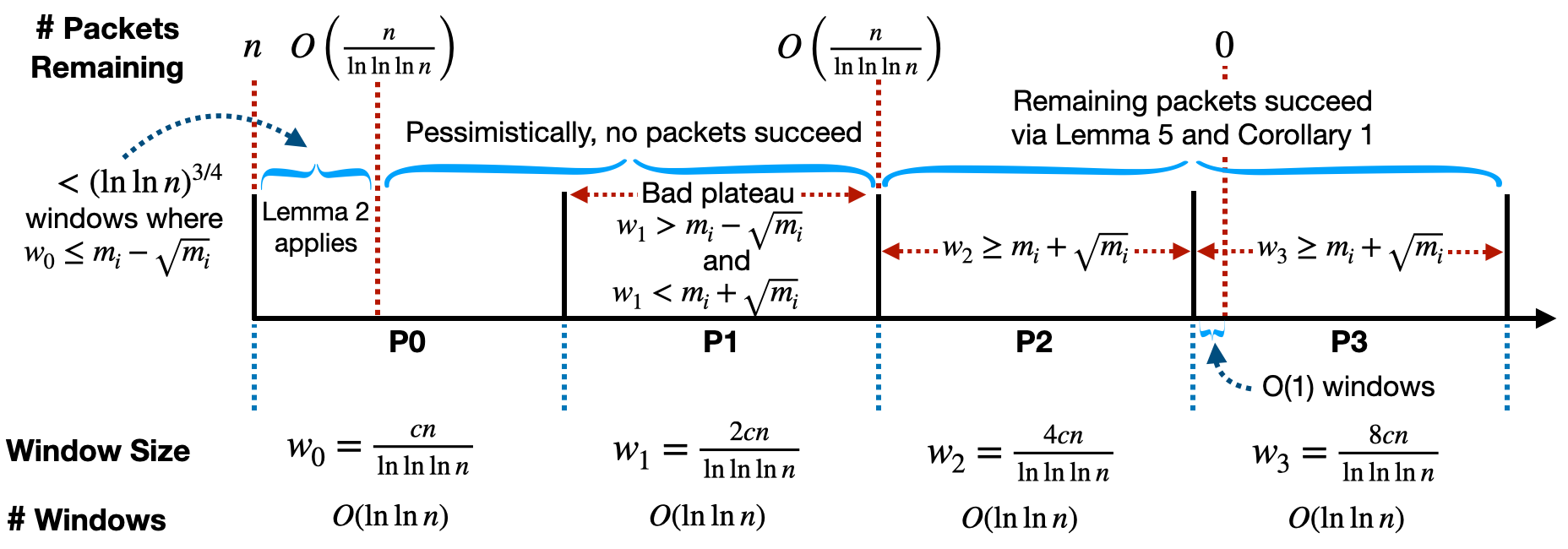}  
\caption{Illustration of the argument used to prove the makespan for \llb.}\label{f:llb} 
\end{figure}

\begin{theorem}\label{thm:LLB}
The makespan of \llb is $O\left(\frac{n\lg\lg n}{\lg\lg\lg n}\right)$.
\end{theorem}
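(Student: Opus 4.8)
The plan is to follow the template of Theorems~\ref{thm:FB} and~\ref{thm:BEB-upper}, but with one essential twist: for \llb the window size must \emph{not} be pushed all the way to $n+\sqrt n$ before packets are cleared. Doing so would force $\Theta(\lg\lg n)$ windows of size $\Theta(n)$ --- a plateau at size $\Theta(n)$ is exactly that long --- costing $\Theta(n\lg\lg n)$ slots, which is too much. Instead I would fix a target window size $w^\star$ of order $n/\lg\lg\lg n$, calibrated so that $e^{n/w^\star}=O(\lceil\lg\lg w^\star\rceil)$; this is precisely the relation that lets a \emph{single} plateau (which has $\lceil\lg\lg w^\star\rceil$ windows) absorb the bulk of the packet reduction, and it is the source of the $\lg\lg\lg n$ in the denominator.

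I would then bound the makespan by three contributions, each $O(n\lg\lg n/\lg\lg\lg n)$. \emph{(i) Ramp-up.} Pessimistically assume no packet succeeds before the first plateau of size at least $w^\star$; since the plateau at window size $2^k$ uses $2^k\lceil\lg\lg 2^k\rceil$ slots, the (super-)geometric sum up to $k\approx\lg w^\star$ is $O(w^\star\lg\lg w^\star)=O(n\lg\lg n/\lg\lg\lg n)$. \emph{(ii) Clearing.} Beginning at the first plateau of size $w^\star$ with $m_0\le n$ packets, split into two sub-phases. While $m_i>w_i$ we have $w_i\le m_i-\sqrt{m_i}$, so the $B\le N-\sqrt N$ branch of Theorem~\ref{thm:neg-assoc-result} gives negative association and Lemma~\ref{lem:concentration} applies; since $f(m)=m\,e^{-m/(w_i-1)}$ is decreasing for $m>w_i$ and $w_i\ge w^\star$, each window removes at least $(1-\epsilon)f(m_i)\ge(1-\epsilon)\,n\,e^{-n/(w^\star-1)}=\Omega(n/\lg\lg n)$ packets, so $O(\lg\lg n)$ windows --- essentially one plateau --- drive $m_i$ below $w^\star/2$. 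Once $m_i\le w_i/2$ I switch to the $B\ge N+\sqrt N$ branch and re-run the argument of Corollary~\ref{cor:remaining} and Lemma~\ref{lem:induction-new} with $n$ replaced by $\Theta(w^\star)$: $m_{i+1}\le(5/4)m_i^2/w_i$, so $m_i$ drops double-exponentially to $n^{0.4}$ within $\lg\lg n+O(1)$ more windows. Both sub-phases run at window size $\Theta(w^\star)$, hence cost $O(w^\star\lg\lg n)=O(n\lg\lg n/\lg\lg\lg n)$. \emph{(iii) Cleanup.} Now $m_i\le n^{0.4}$ and $w_i=\Theta(n/\lg\lg\lg n)>2m_i$, so Lemma~\ref{lem:last-window} (we cannot invoke Lemma~\ref{lem:end}, which demands $w_i\ge n+\sqrt n$) finishes all remaining packets in $O(1)$ windows, each failing with probability $O(n^{-0.2}\lg\lg\lg n)$; a union bound over the $O(\lg\lg n)$ windows used in (ii)--(iii) keeps the overall success probability at $1-O(1/n)$. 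Adding (i)--(iii) yields the claimed $O(n\lg\lg n/\lg\lg\lg n)$.

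The main obstacle I anticipate is sub-phase~(ii)(a): proving that the packet count really collapses from $n$ to $O(w^\star)$ inside one (or a constant number of) plateau(s). This needs (1) verifying $B\le N-\sqrt N$ holds throughout so that the Chernoff bounds of Theorem~\ref{thm:dhubashi-panconesi} are legitimate; (2) a short auxiliary induction of the form $m_{i+1}\le m_i\big(1-\tfrac12 e^{-m_i/(w_i-1)}\big)$ with per-window $1/n^2$ error, using Fact~\ref{fact:taylor}; and, most delicately, (3) calibrating $w^\star$ so that the per-window progress $\Omega(n\,e^{-n/w^\star})$ and the plateau length $\lceil\lg\lg w^\star\rceil$ are matched up to constants, while checking that the \emph{same} constant makes the ramp-up sum in (i) come out to $O(n\lg\lg n/\lg\lg\lg n)$. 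A secondary, routine task is restating Corollary~\ref{cor:remaining} and Lemma~\ref{lem:induction-new} with their $n^{7/10}$ and $n+\sqrt n$ thresholds expressed relative to the current window size $\Theta(w^\star)$ rather than to $n$.
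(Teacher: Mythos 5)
Your plan follows the same route as the paper's proof: fix the critical window size at $w^\star=\Theta(n/\lg\lg\lg n)$ so that $e^{n/w^\star}=O(\lg\lg n)$, clear all but $O(w^\star)$ packets at a linear rate of $\Omega(n/\lg\lg n)$ per window using the $B\le N-\sqrt N$ branch of Theorem~\ref{thm:neg-assoc-result}, then switch to the $B\ge N+\sqrt N$ branch and run the double-exponential induction of Lemma~\ref{lem:induction-new} with thresholds rescaled to $w^\star$, and finish off the stragglers via Lemma~\ref{lem:last-window}. In places you are more careful than the paper (explicitly costing the ramp-up to the first plateau of size $w^\star$, and noting that Corollary~\ref{cor:remaining} and Lemma~\ref{lem:end} cannot be invoked verbatim because their hypotheses reference $n+\sqrt n$).

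Two steps would fail as written. First, the claim ``while $m_i>w_i$ we have $w_i\le m_i-\sqrt{m_i}$'' is false near the boundary: there is a dead zone $m_i-\sqrt{m_i}<w_i<m_i+\sqrt{m_i}$ in which neither case of Theorem~\ref{thm:neg-assoc-result} applies, so Lemma~\ref{lem:concentration} gives you nothing there. You flag this as a condition to ``verify,'' but it cannot be verified---as $m_i$ decreases past $w_i$ it necessarily passes through this zone, so it must be handled. The paper's fix is to observe that the dead zone is confined to a single plateau (once that plateau ends the window doubles and $w_i\ge m_i+\sqrt{m_i}$ holds thereafter) and to pessimistically assume no packet succeeds during those $O(\lg\lg n)$ windows, which costs only $O(w^\star\lg\lg n)$ extra slots and keeps the bound. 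Second, your error accounting for the cleanup phase does not close: a union bound over windows that each fail with probability $O(n^{-0.2}\lg\lg\lg n)$ yields an overall failure probability of $\Omega(n^{-0.2})$, not $O(1/n)$. You need the sequential, multiplicative argument of Lemma~\ref{lem:end}: packets survive $k$ consecutive cleanup windows only if every one of those windows fails, so a constant number (about six) of applications of Lemma~\ref{lem:last-window} drives the failure probability below $1/n$. Both repairs are short, but both are needed.
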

\begin{proof}
We consider four consecutive plateaus in our argument, denoted by P0, P1, P2, and P3. Recall that, for any fixed plateau, its windows are the same size; therefore, we denote the window sizes for P0, P1, P2, and P3 by $w_0$, $w_1$, $w_2$, and $w_3$, respectively. Within each plateau, we use $m_i$ to denote the number of packets at the start of $i^{th}$ window of that plateau, for $i\geq 0$.

For ease of presentation, we break our analysis into two parts. The first deals with P0, where we address windows of size at most $m_i - \sqrt{m_i}$, allowing us to leverage Lemma~\ref{lem:concentration}. The second addresses P1, P2, and P3, where we show an eventual transition to windows of size at least $m_i + \sqrt{m_i}$, again allowing us to invoke Lemma~\ref{lem:concentration} and finish the remaining packets.  Figure~\ref{f:llb} highlights the structure of our argument.

\medskip

\noindent{\it \underline{Part 1}.} Le P0 be the first plateau with window size $w_0=cn/\ln\ln\ln n$ for some constant $c\geq 4$. Pessimistically, assume that no packets succeed before P0; otherwise, this can only improve the makespan.  

In P0, consider those windows $i\geq 0$ where $m_i$ satisfies $w_0 \leq m_i - \sqrt{m_i}$. By Lemma~\ref{lem:concentration}, each such window finishes at least the following number of packets:
\begin{eqnarray*}
\frac{(1-\epsilon)m_i}{e^{\frac{m_i}{(cn/\ln\ln\ln n) - 1}}}  &>& \frac{(1-\epsilon)n}{e^{\frac{n}{(cn/\ln\ln\ln n) - 1}}\cdot \ln\ln\ln n} \\
&\geq & \frac{(1-\epsilon)n}{(\ln\ln n)^{\frac{2}{c}} \cdot\ln\ln\ln n}  \\
&=& \frac{(1-\epsilon)n}{(\ln\ln n)^{\frac{\ln\ln\ln\ln n}{\ln\ln\ln n}+ \frac{2}{c}}}\\
&>& \frac{n}{(\ln\ln n)^{ \frac{3}{c}}} 
\end{eqnarray*}
\noindent where the first inequality follows noting that: $m_i \leq n$, which is placed in the exponent of the denominator); and since $w_0 \leq m_i - \sqrt{m_i}$, we have $m_i > n/\ln\ln\ln n$ in the windows considered, and this value is used in the numerator. The third line holds since  $(\ln\ln n)^{\ln (\ln\ln\ln n)} = (\ln\ln\ln n)^{\ln (\ln\ln n)}$, and the last line follows for sufficiently large $n$. It suffices to set: 
$$\epsilon = \sqrt{\frac{4e\ln^2(n)}{n}}$$

\noindent in order to obtain an error probability at most:
 $$\exp\left(-\frac{4e\ln^2(n)}{n} \cdot \frac{n}{2\ln\ln\ln(n) e^{\frac{n}{(cn/\ln\ln\ln n) - 1} } } \right) =  O\left(\frac{1}{n^2}\right).$$

It follows that after $\frac{n-n/\ln\ln\ln n}{n/(\ln\ln n)^{3/c}} < (\ln\ln n)^{3/4}$ windows in P0, the number of remaining packets is $O(n/\ln\ln\ln n)$. Also, for any window with index $i \geq (\ln\ln n)^{3/4}$ in P0, we have $w_0 > m_i - \sqrt{m_i}$; note that this also implies $w_j > m_i - \sqrt{m_i}$  for $j=1,2,3$ and for any $i$, given that windows in subsequent plateaus are larger, and the number of packets can only decrease. \medskip

\noindent{\it \underline{Part 2}.}  Most packets succeed in Part 1, and now we would now like to show that those remaining finish quickly. However, as established above, at some point in P0, we have  $w_j \geq  m_i - \sqrt{m_i} + 1$ for  $j=1,2,3$ and for any $i$. Therefore, we must wait for windows of size at least $m_i + \sqrt{m_i}$, since this is the alternative condition for invoking  Lemma~\ref{lem:concentration}, which is crucial to our later lemmas for showing packets succeed.

Fortunately, we should not expect to wait too long, since the window size continues to increase, while the number of packets can only remain the same or decrease. In the worst case,  P1 contains windows of size $w_1 = m_i - \sqrt{m_i} + 1$.  Then, at the start of P2, the window size becomes $2w_i = 2m_i - 2\sqrt{m_i}+2 \geq m_i +\sqrt{m_i}$ for $m_i \geq 4$. Therefore, we have at most one such ``bad'' plateau with P1, where pessimistically no packets succeed, after which both $w_2$ and $w_3$ in P2 and P3, respectively, are at least $m_i + \sqrt{m_i}$ for all $i$, so long as $m_i\geq 9$.



Starting with P2 and treating the remaining $O(n/\ln\ln\ln n)$ packets as an ``initial batch'', we invoke  Lemma~\ref{lem:induction-new}, which implies that after at most $\lg\lg(n)+1$ windows, less than $n^{7/10}$ packets will remain. Does P2 have enough windows? The number of windows in P2 is $\lg\lg(4cn/\ln\ln\ln n)$, and we note the upper and lower bounds:
$$\lg\lg(n) - 1 < \lg\lg\left(\frac{4cn}{\ln\ln\ln n}\right) < \lg\lg(n) + 1.$$
Therefore, we must proceed into P3 to reduce the number of packets to less than $n^{7/10}$. Then, if {\it at least} $n^{2/5}$ packets still remain, by Case 2 of Corollary 1, at most $O(n^{2/5})$ packets remain by the end of the next window, and they will finish within an additional $6$ windows by Lemma ~\ref{lem:end}. 

Finally, tallying up over the slots in $P0$ through $P3$, the makespan is $O(\ln\ln n)\times\, \allowbreak O(\frac{n}{\ln\ln\ln n})=O(\frac{n\ln\ln n}{\ln\ln\ln n})$.
\end{proof}




%

\section{Summary and Future Work}\label{sec:conclusion}

We have argued that standard Chernoff bounds can be applied to analyze singletons, and we illustrate how they simplify the analysis of several backoff algorithms under a batch of packets. We believe that the results presented here demonstrate the benefits of this approach, but there are some obvious extensions that may be of interest. First, we believe that lower bounds on the makespan of these backoff algorithms can be proved using this approach. Second,  a similar treatment is likely possible for polynomial backoff or generalized exponential backoff (see~\cite{BenderFaHe05} for the specification of these algorithms).  Third, it may be interesting to examine whether this analysis can be extended  to the case where packets have different sizes, as examined in~\cite{bender:heterogeneous}.

\medskip\medskip

\noindent{\bf Acknowledgements.} We are grateful to the anonymous reviewers; their comments significantly improved our manuscript.



\clearpage
\appendix

\noindent{\bf\LARGE Appendix}

\section{Chernoff Bounds and Property 1}\label{app:property1}

In the derivation of  Chernoff bounds  given in Dubhashi and Panconesi~\cite{dubhashi:concentration}, the equality version for Equation~\ref{equ:mgf} is claimed in the line above Equation 1.3 on page $4$ by invoking independence of the random variables. However, the indicator variables for counting singletons are dependent. We can show, in Claim~\ref{claim-chernoff} below,  that Property~\ref{prop1} leads to Equation~\ref{equ:mgf}, where the direction of the inequality aligns with the derivation of Chernoff bounds.



\begin{claim}\label{claim-chernoff}
Let $X_1,\cdots,X_n$ be a set of indicator random variables satisfying the property:
\begin{equation}\label{equ:cond}
Pr\[\bigwedge\limits _{i \in \mathbbm{S}} \{X_i=1\} \right] \leq \prod_{i \in \mathbbm{S}} Pr\[X_i=1\]
\end{equation}
for all subsets $\mathbbm{S} \subset\{1,\cdots,n\}$. Then the following holds:
\begin{equation}\label{equ:mgf}
E\[\prod_{i=1}^n e^{\lambda X_i}\] \leq \prod_{i=1}^n E\[e^{\lambda X_i}\]
\end{equation}
\end{claim}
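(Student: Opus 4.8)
The plan is to expand the product $\prod_{i=1}^n e^{\lambda X_i}$ using the fact that each $X_i$ is an indicator, so $e^{\lambda X_i} = 1 + (e^\lambda - 1)X_i$, and then compare the expectation of the expanded product term-by-term with the analogous expansion of $\prod_{i=1}^n E[e^{\lambda X_i}]$. Since $e^{\lambda X_i} = 1 + (e^\lambda-1)X_i$ exactly (because $X_i \in \{0,1\}$ forces $X_i^k = X_i$), multiplying out over all $i$ yields
\begin{equation*}
\prod_{i=1}^n e^{\lambda X_i} = \sum_{\mathbbm{S} \subseteq \{1,\dots,n\}} (e^\lambda-1)^{|\mathbbm{S}|} \prod_{i\in\mathbbm{S}} X_i,
\end{equation*}
where the empty-set term contributes $1$. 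Taking expectations and using linearity,
\begin{equation*}
E\Bigl[\prod_{i=1}^n e^{\lambda X_i}\Bigr] = \sum_{\mathbbm{S}} (e^\lambda-1)^{|\mathbbm{S}|}\, E\Bigl[\prod_{i\in\mathbbm{S}} X_i\Bigr] = \sum_{\mathbbm{S}} (e^\lambda-1)^{|\mathbbm{S}|}\, Pr\Bigl[\bigwedge_{i\in\mathbbm{S}} X_i = 1\Bigr],
\end{equation*}
the last step because $\prod_{i\in\mathbbm{S}} X_i$ is itself an indicator of the event $\bigwedge_{i\in\mathbbm{S}} X_i=1$.

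Next I would carry out the same expansion on the right-hand side: since $E[e^{\lambda X_i}] = 1 + (e^\lambda-1)Pr[X_i=1]$, the product over $i$ gives
\begin{equation*}
\prod_{i=1}^n E[e^{\lambda X_i}] = \sum_{\mathbbm{S}} (e^\lambda-1)^{|\mathbbm{S}|} \prod_{i\in\mathbbm{S}} Pr[X_i=1].
\end{equation*}
Now I compare the two sums coefficient by coefficient. For $\lambda > 0$ (which is the relevant range for the upper-tail Chernoff bound; the lower-tail bound uses $\lambda < 0$ and is handled by the complementary indicators $1-X_i$, or by a separate symmetric remark), we have $e^\lambda - 1 > 0$, so every coefficient $(e^\lambda-1)^{|\mathbbm{S}|}$ is nonnegative. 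Property~\eqref{equ:cond} gives exactly $Pr[\bigwedge_{i\in\mathbbm{S}} X_i=1] \le \prod_{i\in\mathbbm{S}} Pr[X_i=1]$ for every $\mathbbm{S}$, so the left-hand sum is dominated term-by-term by the right-hand sum, yielding \eqref{equ:mgf}.

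The one point requiring care — and the main obstacle — is the sign of $e^\lambda - 1$: the term-by-term domination only works when all coefficients are nonnegative, i.e. $\lambda \ge 0$. For the lower-tail bound \eqref{eqn:chernoff-lower} one applies the standard Chernoff derivation with a negative parameter; to route that through \eqref{equ:mgf} I would instead work with the indicators $Y_i = 1 - X_i$ and observe that $\{X_i\}$ satisfying Property~\ref{prop1} does not immediately give the same property for $\{Y_i\}$, so the cleanest fix is to note that the line in~\cite{dubhashi:concentration} invoking independence is applied with $\lambda$ of the appropriate sign in each of the two derivations, and to state Claim~\ref{claim-chernoff} for $\lambda \ge 0$, remarking that the lower-tail argument in~\cite{dubhashi:concentration} can be recast so that it too only ever needs the moment-generating-function factorization at a nonnegative argument (e.g., by bounding $Pr[X < (1-\epsilon)E[X]] = Pr[-X > -(1-\epsilon)E[X]]$ and using $e^{-\lambda X} = \prod(1-(1-e^{-\lambda})X_i)$ with $1 - e^{-\lambda} \in (0,1)$ for $\lambda>0$, which again has nonnegative coefficients). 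Everything else is bookkeeping: the binomial-type expansion is exact because indicators are idempotent, and linearity of expectation plus Property~\ref{prop1} does the rest.
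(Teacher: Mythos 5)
Your proposal is correct for $\lambda\ge 0$ and, at its core, is the same argument as the paper's: expand the product, use idempotency of indicators, and compare term by term using Property~\ref{prop1}. The difference is in the expansion. The paper uses the infinite Taylor series $e^{\lambda X_i}=\sum_k \lambda^k X_i^k/k!$ and then wrangles the resulting multi-index coefficients $f_r$ over the sets $\mathcal I(k,r)$; you instead use the exact finite identity $e^{\lambda X_i}=1+(e^\lambda-1)X_i$, which collapses the whole computation to a single sum over subsets $\Sbbm$ with coefficients $(e^\lambda-1)^{|\Sbbm|}$. Your version is shorter, avoids any convergence/rearrangement bookkeeping, and makes the one substantive hypothesis visible: the term-by-term domination requires every coefficient to be nonnegative, i.e.\ $\lambda\ge 0$. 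The paper's proof has exactly the same hidden restriction (for $\lambda<0$ the coefficients $\lambda^r$ alternate in sign and the inequality between corresponding terms points the wrong way), but never states it, so your explicit flag is an improvement rather than a defect.

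One caution about your closing remark on the lower tail: the proposed fix does not work as written. Expanding $\prod_i\bigl(1-(1-e^{-\lambda})X_i\bigr)$ gives coefficients $\bigl(-(1-e^{-\lambda})\bigr)^{|\Sbbm|}$, which alternate in sign, so it does \emph{not} "again have nonnegative coefficients," and Property~\ref{prop1} alone cannot be pushed through the same term-by-term comparison. The lower-tail bound~(\ref{eqn:chernoff-lower}) genuinely needs the complementary condition $Pr[\bigwedge_{i\in\Sbbm}X_i=0]\le\prod_{i\in\Sbbm}Pr[X_i=0]$ (equivalently, Property~\ref{prop1} for the indicators $1-X_i$), which is what full negative association supplies. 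For the singleton indicators used in this paper that complementary condition does hold, but it is a separate fact requiring a separate proof, not a consequence of the inequality you (or the paper) establish here.
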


\begin{proof}
We will show that the both sides of equation (\ref{equ:mgf}) can be expressed as polynomial functions of $\lambda$ with a similar form. First, let us look at the left hand side (LHS). By the Taylor expansion $e^{\lambda X_i} = \sum_{k=0}^{\infty} \lambda^k\frac{X_i^k}{k!}$,  the product $\prod_{i=1}^n e^{\lambda X_i} = \prod_{i=1}^n \left(\sum_{k=0}^{\infty} \frac{\lambda^k}{k!}X_i^k\right)$, which can be expressed as $\sum_{r=0}^{\infty} f_r \lambda^r$ with $f_0=1$ and the other coefficients $f_r$, $r\geq 1$, being functions of $X_i$'s. The expression of $f_r$, $r\geq 1$, is given as $f_r=\sum_{k=1}^{\min\{r,n\}}\sum_{(i_1,i_2,\cdots,i_k)\subset \{1,\cdots,n\}} f_r(i_1,\cdots,i_k)$ with 
\begin{equation}\label{equ:fr-k}
f_r(i_1,\cdots,i_k) = \sum_{\substack{(d_1,\cdots,d_k): d_1\leq d_2 \leq \cdots \leq d_k\\ d_1+d_2+\cdots + d_k=r}}  \frac{X_{i_1}^{d_1}}{d_1!} \frac{X_{i_2}^{d_2}}{d_2!} \cdots \frac{X_{i_k}^{d_k}}{d_k!},
\end{equation}
where $d_1,\cdots,d_k$ are positive integers. Equation (\ref{equ:fr-k}) tells that given a subset of size $k$, $\{X_{i_1},\cdots,X_{i_k}\}$, from $\{X_1,\cdots,X_n\}$, we choose all possible sets of increasing positive integers $d_1,\cdots,d_k$ as the power for $X_{i_1},\cdots,X_{i_k}$ while keeping their sum equal to $r$. Here, we list the expressions of $f_1$, $f_2$, and $f_3$ as illustrative examples: $f_1 = \sum_{i=1}^n X_i$, $f_2 = \sum_{i=1}^n \frac{X_i^2}{2!} + \sum_{1\leq i_1\neq i_2 \leq n} X_{i_1}X_{i_2}$, and 
$f_3 = \sum_{i=1}^n \frac{X_i^3}{3!} + \sum_{1\leq i_1\neq i_2 \leq n} X_{i_1}\frac{X_{i_2}^2}{2!} + \sum_{1\leq i_1\neq i_2 \neq i_3 \leq n} X_{i_1}X_{i_2}X_{i_3}$. With the expression of $f_r$'s, the LHS becomes
$$
\text{LHS} = 1 + \sum_{r=1}^{\infty} \lambda^r \sum_{k=1}^{\min\{r,n\}} \sum_{(i_1,i_2,\cdots,i_k)\subset \{1,\cdots,n\}} E[f_r(i_1,\cdots,i_k)],
$$
where $E[f_r(i_1,\cdots,i_k)]=\sum_{\substack{(d_1,\cdots,d_k): d_1\leq d_2 \leq \cdots \leq d_k\\ d_1+d_2+\cdots + d_k=r}}   \frac{E\[X_{i_1}^{d_1}X_{i_2}^{d_2}\cdots X_{i_k}^{d_k}\]}{d_1!d_2!\cdots d_k!}$.

We use similar derivations on the right hand side (RHS), which becomes $\text{RHS} = \prod_{i=1}^n \left(\sum_{k=0}^{\infty} \lambda^k \frac{E\[X_i^k\]}{k!}\right) = 1 + \displaystyle \sum_{r=1}^{\infty} \lambda^r \sum_{k=1}^{\min\{r,n\}}  \sum_{(i_1,i_2,\cdots,i_k)  \subset \{1,\cdots,n\}} \, \allowbreak  \tilde f_r(i_1,\cdots,i_k)$ with
$$
\tilde f_r(i_1,\cdots,i_k) = \sum_{\substack{(d_1,\cdots,d_k): d_1\leq d_2 \leq \cdots \leq d_k\\ d_1+d_2+\cdots + d_k=r}}  \frac{E[X_{i_1}^{d_1}]}{d_1!} \frac{E[X_{i_2}^{d_2}]}{d_2!} \cdots \frac{E[X_{i_k}^{d_k}]}{d_k!}.
$$
Note that $\tilde f_r(i_1,\cdots,i_k) $ is similar to $f_r(i_1,\cdots,i_k)$ in equation (\ref{equ:fr-k}) but replaces $X_{i_j}^{d_j}$, $j=1,\cdots,k$, with their expectations. Consequently, the relationship between the LHS and RHS lies solely on the difference between $E\left[X_{i_1}^{d_1}\cdots X_{i_k}^{d_k}\right]$ and $E\left[X_{i_1}^{d_1}\right]\cdots E\left[X_{i_k}^{d_k}\right]$. 

For indicator random variables, $X_i^k=X_i$ for any positive integer $k$. Thus, 
$$
E\left[X_{i_1}^{d_1}\cdots X_{i_k}^{d_k}\right]=E\left[X_{i_1}\cdots X_{i_k}\right]\, \text{and}\,  E\left[X_{i_1}^{d_1}\right]\cdots E\left[X_{i_k}^{d_k}\right]=E\left[X_{i_1}\right]\cdots E\left[X_{i_k}\right].
$$ 
In addition, $E[X_i]=Pr[X_i=1]$, so $E\left[X_{i_1}\right]\cdots E\left[X_{i_k}\right]=\prod_{j=\{i_1,\cdots,i_k\}}Pr[X_j=1]$. On the other hand, $E[X_{i_1}\cdots X_{i_k}]=Pr\left[\bigwedge_{j=\{i_1,\cdots,i_k\}}\{X_i=1\}\right]$.

Thus, if the property $Pr\[\bigwedge\limits _{i \in \mathbbm{S}} \{X_i=1\} \right] \leq \prod_{i \in \mathbbm{S}} Pr\[X_i=1\]$ is satisfied for all subsets $\mathbbm{S} \subset\{1,\cdots,n\}$, $
E\left[X_{i_1}^{d_1}\cdots X_{i_k}^{d_k}\right] \leq E\left[X_{i_1}^{d_1}\right]\cdots E\left[X_{i_k}^{d_k}\right]$ for all possible $(i_1,\cdots,i_k)\subset \{1,\cdots,n\}$. Consequently, LHS $\leq$ RHS.
\end{proof}
\end{document}